\numberwithin{equation}{section}
\newtheorem{theorem}{Theorem}[section]
\newtheorem{lemma}[theorem]{Lemma}
\newtheorem{coro}[theorem]{Corollary}
\newtheorem{proposition}[theorem]{Proposition}
\theoremstyle{remark}
\newtheorem{remark}[theorem]{Remark}
\begin{document}

\title[]{Relations between generalised Wishart matrices, the Muttalib--Borodin model and matrix spherical functions}


\date{}

\author{Peter J. Forrester}
\address{School of Mathematical and Statistics, The University of Melbourne, Victoria 3010, Australia}
\email{pjforr@unimelb.edu.au}

\dedicatory{}

\keywords{ }

\begin{abstract}
Generalised uncorrelated Wishart matrices are formed out of rectangular standard Gaussian data matrices with a certain pattern of zero entries. Development of the theory in the real and complex cases has proceeded along separate line. For example, emphasis in the real case has been placed on the Bellman and Riesz distributions, while that in the complex case has been shown to be closely related to the Muttalib-Borodin model. In this work, as well as uniting the lines of development, a tie in with matrix spherical functions is identified in the context of deducing the eigenvalue probability density function from the joint element probability density function.
\end{abstract}

\maketitle

\section{Introduction}
\subsection{Some classes of real and complex random matrix ensembles}
The purpose of this article is to link two strands of random matrix theory which hitherto have been considered in isolation. Chronologically, the first of the strands concerns the probability density function (PDF) on the space of real positive definite matrices,
\begin{equation}\label{qW1a}
{1 \over \mathcal N_N^{(1)}(\boldsymbol \alpha)} q^{(1)}_{\boldsymbol \alpha}({W}^{(1)}) e^{- {\rm Tr} \, {W}^{(1)}/2}\chi_{{W}^{(1)}>0},
\end{equation}
where
\begin{equation}\label{1.7a}
q^{(1)}_{\boldsymbol \alpha}({W}^{(1)}) :=
\prod_{i=1}^{N-1} ( \det ({W}_i^{(1)}))^{(\alpha_i - \alpha_{i+1} - 1)/2}
( \det {W}^{(1)} )^{(\alpha_N-1)/2}
\end{equation}
and
\begin{align}\label{q1W}
\mathcal N_N^{(1)}(\boldsymbol \alpha) := 2^{\sum_{j=1}^N (\alpha_j+1)/2} (2\pi)^{N(N-1)/4} \prod_{i=1}^N \Gamma((\alpha_i + 1)/2). 
\end{align}
In (\ref{1.7a}) ${W}_i^{(1)}$ denotes the leading $i \times i$ sub-matrix of ${W}^{(1)}$ and the parameters $\{ \alpha_i \}$ are such that $\alpha_i > -1$ ($i=1,\dots,N$). Generally we write $\chi_A = 1$ for $A$ true, and $\chi_A = 0$ otherwise, while $W^{(1)} > 0$ denotes the requirement that $W^{(1)}$ be positive definite.

Work of Veleva \cite{Ve09a,Ve09a,Ve11} (see also \cite{GL19,GKLS22}) has given an entry-wise construction of random matrices $Y^{(1)}$ such that $W^{(1)}= (Y^{(1)})^T Y^{(1)}$ has joint entry PDF (\ref{qW1a}).

\begin{proposition}\label{P1x} (Veleva)
Let $Y^{(1)}$ be an upper triangular random matrix with all entries real. The $k$-th diagonal entry is required to be such that its square has distribution $\Gamma[(\alpha_k +1)/2,{1/2}]$ with $ \alpha_k > -1$, while the strictly upper triangular entries of $Y^{(1)}$ are independent standard real Gaussians. The joint element PDF of $W^{(1)}= (Y^{(1)})^T Y^{(1)}$ is then equal to (\ref{qW1a}). Alternatively, with $n$ a positive integer, $n \ge N$, now require that the parameters $\{ \alpha_i \}$ be non-negative integers obeying the ordering
\begin{equation}\label{1.1x}
1 + \alpha_1 \le 2 + \alpha_2 \le \cdots \le N + \alpha_N \le n.
\end{equation}
Introduce the $n \times N$ rectangular random matrices $\tilde{Y}^{(1)}$ by the
requirement first that there is a certain pattern of zero entries,
\begin{equation}\label{1.1bx}
\tilde{Y}_{jk}^{(1)} = 0 \qquad {\rm if} \: \: j > k + \alpha_k.
\end{equation}
(Note that such prescribed zero entries then occur in the strictly lower triangular portion only.) Let the entries of $\tilde{Y}^{(1)}$ not prescribed to equal zero by (\ref{1.1bx}) be independent standard real Gaussians. One has that the positive definite random matrix $\tilde{W}^{(1)} := (\tilde{Y}^{(1)})^T \tilde{Y}^{(1)}$ also has the joint element PDF (\ref{qW1a}).
\end{proposition}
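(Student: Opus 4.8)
The plan is to treat the two constructions separately, in each case passing to the Cholesky-type factorisation $W^{(1)} = R^T R$ with $R$ upper triangular and with positive diagonal. For the first construction, note that $Y^{(1)}$ is itself upper triangular, and --- taking its diagonal entries to be the positive square roots, which affects neither $W^{(1)}$ nor the claimed law --- it is precisely the Cholesky factor of $W^{(1)} = (Y^{(1)})^T Y^{(1)}$. I would then recall the classical Jacobian for the map $W^{(1)} \mapsto R$ on the cone of positive definite matrices, namely $2^N \prod_{k=1}^N R_{kk}^{N-k+1}$, together with the observation that the leading $i \times i$ block of $W^{(1)}$ is $R_i^T R_i$ with $R_i$ the corresponding block of $R$, so that $\det W_i^{(1)} = \prod_{k=1}^i R_{kk}^2$. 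Substituting this into (\ref{1.7a}) and collecting the exponent of each $R_{kk}$, the telescoping identity $\sum_{i=k}^{N-1}(\alpha_i - \alpha_{i+1} - 1) + (\alpha_N - 1) = \alpha_k - N + k - 1$ yields $q^{(1)}_{\boldsymbol\alpha}(W^{(1)}) = \prod_{k=1}^N R_{kk}^{\alpha_k - N + k - 1}$. Multiplying by the Jacobian and using ${\rm Tr}\, W^{(1)} = \sum_k R_{kk}^2 + \sum_{j < k} R_{jk}^2$, the joint density in the coordinates $\{ R_{jk} \}_{j \le k}$ (with $R_{kk} > 0$) factorises into a product of $e^{-R_{jk}^2/2}$ over the strictly upper triangular entries and $R_{kk}^{\alpha_k} e^{-R_{kk}^2/2}$ over the diagonal; the former identify the strictly upper triangular entries as i.i.d.\ standard Gaussians, while the substitution $x = R_{kk}^2$ turns the latter into a constant multiple of $x^{(\alpha_k-1)/2} e^{-x/2}$, the $\Gamma[(\alpha_k+1)/2,1/2]$ density. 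A short computation of the product of these one-dimensional normalisations recovers $\mathcal N_N^{(1)}(\boldsymbol\alpha)$ as in (\ref{q1W}) (and in any case the constant is forced once the functional form of the density is known).

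For the second construction I would introduce the reduced QR factorisation $\tilde Y^{(1)} = QR$, with $Q$ of size $n \times N$ having orthonormal columns and $R$ upper triangular of size $N \times N$ with positive diagonal, so that $\tilde W^{(1)} = R^T R$; it then suffices to show this $R$ has the same law as $Y^{(1)}$ above. Write $y_1, \dots, y_N$ for the columns of $\tilde Y^{(1)}$ and $q_1, \dots, q_N$ for those of $Q$. The zero pattern (\ref{1.1bx}) means $y_k$ is supported on the first $m_k := k + \alpha_k$ coordinates with i.i.d.\ standard Gaussian entries there, while (\ref{1.1x}) gives $m_1 \le m_2 \le \dots \le m_N \le n$, so the supports are nested and $q_1, \dots, q_{k-1} \in {\rm span}(y_1, \dots, y_{k-1})$ are supported on the first $m_k$ coordinates as well. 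Conditioning on $y_1, \dots, y_{k-1}$ and completing $q_1, \dots, q_{k-1}$ to an orthonormal basis $q_1, \dots, q_{k-1}, e_k, \dots, e_{m_k}$ of $\mathbb{R}^{m_k}$, the restriction of $y_k$ to $\mathbb{R}^{m_k}$ being a standard Gaussian vector shows that $\langle y_k, q_j \rangle$ ($j < k$) and $\langle y_k, e_l \rangle$ ($k \le l \le m_k$) are i.i.d.\ standard Gaussians, independent of the conditioning. Hence $R_{jk} = \langle y_k, q_j \rangle$ for $j < k$ are i.i.d.\ standard Gaussians, while $R_{kk}^2 = \sum_{l=k}^{m_k} \langle y_k, e_l \rangle^2$ is a sum of $m_k - k + 1 = \alpha_k + 1$ squared independent standard Gaussians, i.e.\ is $\Gamma[(\alpha_k+1)/2,1/2]$-distributed, all independent of $y_1, \dots, y_{k-1}$. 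Iterating over $k = 1, \dots, N$ gives that $\{ R_{jk} \}_{j \le k}$ are mutually independent with precisely the marginals describing $Y^{(1)}$, so that $\tilde W^{(1)}$ and $W^{(1)}$ are equal in law, which proves the proposition.

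The routine ingredients are the Jacobian of the Cholesky map and the $\chi^2$--gamma dictionary; the step requiring genuine care is the second construction, where one must check that the ordering (\ref{1.1x}) really forces nested column supports --- this is exactly what lets the Gram--Schmidt coefficients decouple column by column --- and that the conditioning arguments are legitimate, which rests on $y_1, \dots, y_N$ being linearly independent almost surely (true since $m_k \ge k$ while ${\rm span}(y_1, \dots, y_{k-1})$ has dimension at most $k-1$ and $y_k$ is a non-degenerate Gaussian on $\mathbb{R}^{m_k}$).
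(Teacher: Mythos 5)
Your proof is correct, and it follows essentially the route the paper takes: the paper does not reprove Proposition \ref{P1x} (it cites Veleva), but your first half is exactly the real-case counterpart of the Cholesky--Jacobian computation the paper carries out for the complex analogue in Proposition \ref{P1}, and your second half is the generalised Bartlett/QR argument with nested column supports that the paper sketches in the remark following the statement (for $\alpha_i=n-i$) and attributes in general to Veleva. Both the telescoped exponents, the normalisation check against (\ref{q1W}), and the conditional orthogonal-invariance step yielding $\chi^2_{\alpha_k+1}$ diagonal entries are sound as written.
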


\begin{remark}  $ $ \\
1.~In the case $\alpha_i = n - i$ Proposition \ref{P1x} is classical. Thus all entries of $\tilde{Y}^{(1)}$ are then independent standard real Gaussians, and so the product $\tilde{W}^{(1)} = (\tilde{Y}^{(1)})^T \tilde{Y}^{(1)}$ is  by definition an uncorrelated real Wishart matrix (see e.g.~\cite[Def.~3.2.1]{Fo10}). The joint element distribution of $\tilde{Y}^{(1)}$ in this case is proportional to $e^{-{\rm Tr} \, (\tilde{Y}^{1)})^T\tilde{Y}^{(1)})/2}$. Pioneering work of Wishart \cite{Wi28} gave that the joint element PDF of $\tilde{W}^{(1)}$ is proportional to $(\det \tilde{W}^{(1)} )^{(n-N-1)/2} e^{-{\rm Tr} (\tilde{W}^{(1)})/2}$ in keeping with the specialisation of (\ref{qW1a}). Now apply the Gram-Schmidt algorithm to the columns of $Y^{(1)}$ to obtain the QR decomposition $Y^{(1)} = Q^{(1)}R^{(1)}$, where the columns of $Q^{(1)}$ form an orthonormal basis for the column space of $Y^{(1)}$, and $R^{(1)}$ is an $N \times N$ upper triangular real matrix with diagonal entries strictly positive. One sees that $\tilde{W}^{(1)} = (R^{(1)})^T R^{(1)}$. Moreover, it was shown by Barlett \cite{Ba33} that the matrix $R^{(1)}$ is specified as for the case $\alpha_i = n - i$ in the definition of the upper triangular matrix $Y^{(1)}$ given in the statement of the proposition.\\
2.~That (\ref{qW1a}) is a PDF follows from work of Bellman \cite{Be56} on a generalisation of the matrix gamma function introduced by Ingham \cite{In33} and Siegel \cite{Si35}; for more on this viewpoint see \cite{GN00}. Earlier still, albeit less directly in that the space of positive definite matrices must be viewed as a particular Jordan algebra, the structure (\ref{qW1a} can be found in \cite{Ri49}. See too \cite[Pg.~137]{FK94} and in particular the dedicated text \cite{Ha21}.
\end{remark}

In the theory of Wishart matrices (see e.g.~\cite{Mu82}), the columns of the matrix $\tilde{Y}^{(1)}$ as defined in Proposition \ref{P1x} can be regarded as containing centered data. The zeros as required by (\ref{1.1bx}) then have the interpretation as missing data \cite{Ve09a}. Up to proportionality, $\tilde{W}^{(1)}$ is the covariance matrix for the measurements. Adopting this viewpoint, there is also interest in the correlation matrix
\begin{equation}\label{C1x}
\tilde{C}^{(1)} := \tilde{A}^{-1/2} \tilde{W}^{(1)} A^{-1/2}, \qquad
\tilde{A} = {\rm diag} \, (\tilde{W}^{(1)}_{11},\tilde{W}^{(1)}_{22},\dots,\tilde{W}^{(1)}_{NN}).
\end{equation}
The explicit form of the joint element PDF is known from \cite{Ve09b}.

\begin{proposition}\label{P1y} (Veleva)
Define $\tilde{C}^{(1)}$ in terms of $\tilde{W}^{(1)}$ as in (\ref{C1x}), or more generally define 
\begin{equation}\label{C1y}
{C}^{(1)} := A^{-1/2} {W}^{(1)} A^{-1/2}, \qquad
A = {\rm diag} \, ({W}^{(1)}_{11},{W}^{(1)}_{22},\dots,{W}^{(1)}_{NN}),
\end{equation}
with ${W}^{(1)}$ as in Proposition \ref{P1x}. The joint element PDF of $C^{(1)}$ is given by
\begin{equation}\label{qW1A}
{\prod_{l=1}^N \Gamma((\alpha_l + l)/2) \over \mathcal N_N^{(1)}(\boldsymbol \alpha)} q^{(1)}_{\boldsymbol \alpha}({C}^{(1)}) \chi_{C^{(1)}_{ii} =1 \,(i=1,\dots,N), \,{C}^{(1)}>0}.
\end{equation}

\end{proposition}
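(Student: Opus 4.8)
The plan is to read the joint element PDF of $C^{(1)}$ directly off that of $W^{(1)}$ in (\ref{qW1a}) by means of the change of variables that separates the diagonal of $W^{(1)}$ from the correlation matrix. With $A$ as in (\ref{C1y}), the map $W^{(1)}\mapsto(A,\,C^{(1)}=A^{-1/2}W^{(1)}A^{-1/2})$ is a bijection from the cone of real symmetric positive definite $N\times N$ matrices onto the product of the positive orthant $\{(W^{(1)}_{11},\dots,W^{(1)}_{NN}):W^{(1)}_{ii}>0\}$ with the set of real symmetric positive definite matrices having all diagonal entries equal to $1$; positivity is preserved in both directions because $A^{1/2}$ is an invertible diagonal matrix, and the constraint $C^{(1)}_{ii}=1$ is automatic, which is the origin of the indicator function in (\ref{qW1A}). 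Since $W^{(1)}_{ij}=(W^{(1)}_{ii}W^{(1)}_{jj})^{1/2}C^{(1)}_{ij}$ for $i<j$ while the $W^{(1)}_{ii}$ are themselves among the new variables, the Jacobian of this map --- with the measure $\prod_{i\le j}dW^{(1)}_{ij}$ on the source and $\prod_i dW^{(1)}_{ii}\,\prod_{i<j}dC^{(1)}_{ij}$ on the target --- is block triangular, with determinant $\prod_{i<j}(W^{(1)}_{ii}W^{(1)}_{jj})^{1/2}=\prod_{i=1}^N(W^{(1)}_{ii})^{(N-1)/2}$.

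Next I would express the two $W^{(1)}$-dependent factors of (\ref{qW1a}) in the new variables. The leading $i\times i$ submatrix satisfies $W^{(1)}_i=A_i^{1/2}C^{(1)}_iA_i^{1/2}$, where $A_i={\rm diag}(W^{(1)}_{11},\dots,W^{(1)}_{ii})$ and $C^{(1)}_i$ is the leading $i\times i$ submatrix of $C^{(1)}$, so that $\det W^{(1)}_i=\big(\prod_{k=1}^iW^{(1)}_{kk}\big)\det C^{(1)}_i$. Inserting this into (\ref{1.7a}) and telescoping the exponents of the factors $W^{(1)}_{kk}$ contributed by the partial products $\prod_{k=1}^iW^{(1)}_{kk}$ gives
\begin{equation*}
q^{(1)}_{\boldsymbol\alpha}(W^{(1)})=q^{(1)}_{\boldsymbol\alpha}(C^{(1)})\prod_{k=1}^N(W^{(1)}_{kk})^{(\alpha_k+k-N-1)/2}.
\end{equation*}
Since moreover ${\rm Tr}\,W^{(1)}=\sum_{k=1}^NW^{(1)}_{kk}$, the factor $e^{-{\rm Tr}\,W^{(1)}/2}$ also splits as a product over $k$. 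Combining the Jacobian, this identity and the exponential, the joint PDF of $(W^{(1)}_{11},\dots,W^{(1)}_{NN},C^{(1)})$ takes the product form
\begin{equation*}
\frac{1}{\mathcal N_N^{(1)}(\boldsymbol\alpha)}\,q^{(1)}_{\boldsymbol\alpha}(C^{(1)})\,\chi_{C^{(1)}_{ii}=1\,(i=1,\dots,N),\,C^{(1)}>0}\prod_{k=1}^N(W^{(1)}_{kk})^{(\alpha_k+k)/2-1}e^{-W^{(1)}_{kk}/2},
\end{equation*}
which in particular shows the diagonal entries to be independent of $C^{(1)}$.

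It then remains only to integrate out $W^{(1)}_{11},\dots,W^{(1)}_{NN}$ over $(0,\infty)$, each factor being the elementary convergent integral $\int_0^\infty t^{(\alpha_k+k)/2-1}e^{-t/2}\,dt=2^{(\alpha_k+k)/2}\Gamma((\alpha_k+k)/2)$ ($\alpha_k>-1$ ensuring convergence at the origin); assembling these Gamma-function constants with the explicit value (\ref{q1W}) of $\mathcal N_N^{(1)}(\boldsymbol\alpha)$ --- the powers of two combining through $\sum_{k=1}^N\big((\alpha_k+k)/2-(\alpha_k+1)/2\big)=N(N-1)/4$ --- yields, after simplification, the constant displayed in (\ref{qW1A}). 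The statement for $\tilde C^{(1)}$ defined via (\ref{C1x}) is then the special case $W^{(1)}=\tilde W^{(1)}$, since by Proposition \ref{P1x} the matrix $\tilde W^{(1)}$ has the same joint element PDF (\ref{qW1a}). I expect the only point requiring genuine care to be the telescoping that identifies the exponent $(\alpha_k+k-N-1)/2$ in the second step; the remainder is one-dimensional Gaussian/Gamma calculus together with bookkeeping of the normalising constant.
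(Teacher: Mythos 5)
Your route is exactly the one the paper itself uses --- not for Proposition \ref{P1y}, which it simply quotes from Veleva, but for the complex analogue Proposition \ref{P1y+}: split off the diagonal via $W^{(1)}\mapsto(W^{(1)}_{11},\dots,W^{(1)}_{NN},C^{(1)})$ with the block-triangular Jacobian $\prod_i (W^{(1)}_{ii})^{(N-1)/2}$ (the real counterpart of the factor $\prod_j t_j^{N-1}$ there), factorise $q^{(1)}_{\boldsymbol\alpha}(W^{(1)})=q^{(1)}_{\boldsymbol\alpha}(C^{(1)})\prod_k (W^{(1)}_{kk})^{(\alpha_k+k-N-1)/2}$, and integrate out the now independent Gamma-distributed diagonal entries; all of these steps, including the telescoped exponent, are correct. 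One caveat concerns your last sentence. The integrals give $\prod_k 2^{(\alpha_k+k)/2}\Gamma((\alpha_k+k)/2)$, so the normalisation your (correct) computation produces is
\[
\frac{\prod_{l=1}^N\Gamma((\alpha_l+l)/2)}{\pi^{N(N-1)/4}\prod_{l=1}^N\Gamma((\alpha_l+1)/2)},
\]
which is $2^{\sum_l(\alpha_l+l)/2}$ times the constant literally displayed in (\ref{qW1A}), since $\mathcal N_N^{(1)}(\boldsymbol\alpha)$ in (\ref{q1W}) still carries the factor $2^{\sum_j(\alpha_j+1)/2}(2\pi)^{N(N-1)/4}$. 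Your constant is the right one: for $\alpha_i=n-i$ it reproduces Muirhead's Theorem 5.1.3, e.g.\ the uniform law of the correlation coefficient when $N=2$, $n=3$, whereas (\ref{qW1A}) as printed does not integrate to one. So the mismatch is a misprint in (\ref{qW1A}) (its denominator should be $\pi^{N(N-1)/4}\prod_l\Gamma((\alpha_l+1)/2)$ rather than the full $\mathcal N_N^{(1)}(\boldsymbol\alpha)$), not a flaw in your argument, but your assertion that the bookkeeping "yields the constant displayed" should be amended to record this.
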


\begin{remark}
In the case $\alpha_i = n - i$, when $\tilde{W}^{(1)}$ corresponds to an uncorrelated real Wishart matrix, the result (\ref{qW1A}) is due to Muirhead \cite[Th.~5.1.3]{Mu82}.
\end{remark}

We turn our attention now to the second strand in the theme of our study. This has its origins in the class of eigenvalue PDFs on the space of complex positive definite matrices
\begin{equation}\label{1.0a}
 {1 \over \prod_{l=1}^N l! \,\Gamma(\theta (l-1) + c +1)} \prod_{l=1}^N  x_l^c e^{-x_l} \chi_{x_l > 0} \prod_{1 \le j < k \le N} (x_j - x_k) (x_j^\theta - x_k^\theta).
\end{equation}
This specifies the Muttalib--Borodin model of random matrix, first introduced in the study of quantum transport by Muttalib \cite{Mu95}, and subsequently shown to permit exact evaluation of its limiting hard edge $k$-point correlation by Borodin \cite{Bo98}, who isolated a biorthogonal structure. A tie in with the triangular random matrices $Y^{(1)}$ of Proposition \ref{P1x} shows itself upon recalling a result of Cheliotis \cite{Ch14} on the construction of random matrices realising (\ref{1.0a}) for the eigenvalue PDF.

\begin{proposition}\label{P1.2x}
Let $Y^{(2)} = [y_{j,k}]_{j,k=1}^N$ be an upper triangular random matrix with diagonal entries distributed by the requirement that $|y_{k,k}|^2 \mathop{=}\limits^{\rm d} \Gamma[\alpha_k+1,1]$ on the diagonal, and with strictly upper triangular entries distributed as standard complex Gaussians. Introduce the positive definite random matrices $W^{(2)} = (Y^{(2)})^\dagger Y^{(2)}$ and let $\{x_j\}_{j=1}^N$ denote the corresponding eigenvalues. For the choice
\begin{equation}\label{1.0a+}
\alpha_k = \theta(k-1) + c, \qquad \theta \ge 0, \: \: c > - 1,
\end{equation}
the eigenvalue PDF of $W^{(2)}$ is given by (\ref{1.0a}).
\end{proposition}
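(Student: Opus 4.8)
The plan is to first identify the joint element PDF of $W^{(2)}$, and then integrate out the unitary angular variables to obtain the eigenvalue PDF. For the element PDF: the eigenvalues of $W^{(2)}$ are the squared singular values of $Y^{(2)}$, and these are unchanged when $Y^{(2)}$ is multiplied on the left by a diagonal unitary matrix, so — standard complex Gaussians being rotation invariant — I may assume the diagonal entries $y_{k,k}$ are positive real with $y_{k,k}^2 \mathop{=}\limits^{\rm d} \Gamma[\alpha_k+1,1]$. Then $Y^{(2)} \mapsto W^{(2)} = (Y^{(2)})^\dagger Y^{(2)}$ is the complex Cholesky decomposition, with Jacobian $(dW^{(2)}) = 2^N \prod_{k=1}^N y_{k,k}^{2(N-k)+1}(dY^{(2)})$. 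Inserting this together with $\det W_k^{(2)} = \prod_{j=1}^k y_{j,j}^2$ (where $W_k^{(2)}$ is the leading $k \times k$ submatrix) and ${\rm Tr}\,W^{(2)} = \sum_{j \le k}|y_{j,k}|^2$ — by the same reasoning that produces (\ref{qW1a})--(\ref{1.7a}) in the real case, with the complex Cholesky Jacobian and the complex gamma weights replacing their real analogues — one finds that the joint element PDF of $W^{(2)}$ is proportional to
\[
\Big(\prod_{i=1}^{N-1}\big(\det W_i^{(2)}\big)^{\alpha_i - \alpha_{i+1} - 1}\Big)\big(\det W^{(2)}\big)^{\alpha_N}\,e^{-{\rm Tr}\,W^{(2)}}\,\chi_{W^{(2)}>0}
\]
(as a check, $\alpha_k = n-k$ recovers the complex Wishart density $\propto (\det W^{(2)})^{n-N}e^{-{\rm Tr}\,W^{(2)}}$). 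Under (\ref{1.0a+}) the bracketed product collapses to $\prod_{i=1}^{N-1}(\det W_i^{(2)})^{-\theta-1}$ and the exponent of $\det W^{(2)}$ becomes $\theta(N-1)+c$.

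Next I would diagonalise $W^{(2)} = U\Lambda U^\dagger$ with $\Lambda = {\rm diag}(x_1,\dots,x_N)$, $x_1 > \cdots > x_N > 0$, and use the Weyl integration formula $(dW^{(2)}) \propto \prod_{1\le j<k\le N}(x_j-x_k)^2\,(dx)\,(U^\dagger dU)$. Since $\det W^{(2)} = \prod_l x_l$ is $U$-independent it comes out of the integral over $U$, and the eigenvalue PDF becomes proportional to $\prod_{j<k}(x_j-x_k)^2\big(\prod_l x_l\big)^{\theta(N-1)+c}e^{-\sum_l x_l}\chi_{x_l>0}\cdot I(\Lambda)$, where $I(\Lambda) := \int_{U(N)}\prod_{k=1}^{N-1}\big(\det(U\Lambda U^\dagger)_k\big)^{-\theta-1}(U^\dagger dU)$ and $(\cdot)_k$ denotes the leading $k \times k$ submatrix. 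The crux is to evaluate $I(\Lambda)$. For this I would invoke Baryshnikov's theorem: for $U$ Haar-distributed the eigenvalue arrays of the successive leading principal submatrices of $U\Lambda U^\dagger$ form a Gelfand--Tsetlin pattern $(\lambda^{(1)},\dots,\lambda^{(N)})$ with $\lambda^{(N)} = (x_1,\dots,x_N)$, distributed uniformly on the interlacing polytope given the top row, with conditional density $\prod_{k=1}^{N-1}k!\,\big/\,\prod_{1\le j<k\le N}(x_j-x_k)$. Using $\det(U\Lambda U^\dagger)_k = \prod_{i=1}^k \lambda_i^{(k)}$, this identifies $I(\Lambda)$, up to a constant, with $\big(\prod_{j<k}(x_j-x_k)\big)^{-1}$ times the iterated integral $\int \prod_{k=1}^{N-1}\prod_{i=1}^k\big(\lambda_i^{(k)}\big)^{-\theta-1}\,d\lambda^{(1)}\cdots d\lambda^{(N-1)}$ over all patterns with top row $\Lambda$.

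To evaluate this last integral I would integrate out the levels one at a time, from $k=1$ upward. At the $k$-th level, substituting $u_i = (\lambda_i^{(k)})^{-\theta}$ (taking $\theta > 0$; the case $\theta = 0$ is degenerate or follows by continuity) turns $\prod_i(\lambda_i^{(k)})^{-\theta-1}d\lambda_i^{(k)}$ into a constant multiple of $\prod_i du_i$ and sends the interlacing constraints to interlacing constraints on the $u_i$, while the function carried up from the previous level is, up to a constant, a Vandermonde in the $u_i$; the single-level integral is then the Dixon--Anderson integral at unit parameters, $\int_{\rm interlacing}\prod_{1\le i<j\le k}(u_i-u_j)\,du = \tfrac{1}{k!}\prod_{1\le i<j\le k+1}(v_i-v_j)$. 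Iterating to $k = N-1$ therefore expresses the integral, up to a constant, as a Vandermonde in the $N$ numbers $x_l^{-\theta}$; since $\prod_{j<k}(x_j^{-\theta}-x_k^{-\theta}) = \big(\prod_l x_l\big)^{-\theta(N-1)}\prod_{j<k}(x_k^\theta-x_j^\theta)$, this gives $I(\Lambda) \propto \big(\prod_l x_l\big)^{-\theta(N-1)}\prod_{1\le j<k\le N}\dfrac{x_j^\theta-x_k^\theta}{x_j-x_k}$.

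Finally, substituting this into the formula for the eigenvalue PDF, the two powers of $\prod_l x_l$ combine to $\big(\prod_l x_l\big)^c$ and one factor $\prod_{j<k}(x_j-x_k)$ cancels, leaving a density proportional to $\prod_{l=1}^N x_l^c e^{-x_l}\chi_{x_l>0}\prod_{1\le j<k\le N}(x_j-x_k)(x_j^\theta-x_k^\theta)$, which is (\ref{1.0a}) up to normalisation; the normalisation is then fixed either by keeping track of the constants above (the weights $y_{k,k}^2 \sim \Gamma[\alpha_k+1,1]$ contribute $\prod_l\Gamma(\alpha_l+1) = \prod_l \Gamma(\theta(l-1)+c+1)$, the Dixon--Anderson steps the $k!$'s) or by citing the known Muttalib--Borodin normalisation constant $\prod_{l=1}^N l!\,\Gamma(\theta(l-1)+c+1)$. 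I expect the evaluation of $I(\Lambda)$ to be the main obstacle. A route that avoids the unitary-group integral entirely is an induction on $N$: one borders $Y^{(2)}$ by its last row and column, tracks how the squared singular values change under this structured perturbation (passing through the eigenbasis of the leading $(N-1)\times(N-1)$ block, where the relevant Gaussian data become independent), and closes the induction by a Dixon--Anderson-type identity; this is close to Cheliotis's original argument.
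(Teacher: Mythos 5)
Your proposal is correct in outline, but note first that the paper does not itself prove Proposition \ref{P1.2x}: it is quoted from Cheliotis \cite{Ch14}, and the machinery the paper supplies for results of this type is the joint element PDF of Proposition \ref{P1} combined with the Gelfand--Na\u{\i}mark spherical integral (\ref{qC1}), which gives the general-$\alpha$ eigenvalue PDF (\ref{1.0}) of Proposition \ref{P1.3x} and then (\ref{1.0a}) upon the specialisation (\ref{1.0a+}). Your first step --- reducing to positive real diagonal entries, applying the complex Cholesky Jacobian (\ref{3.2}) and $\det W^{(2)}_k=\prod_{j\le k}y_{jj}^2$ to obtain the element PDF (\ref{1.5a}) --- coincides exactly with the paper's proof of Proposition \ref{P1}. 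Where you genuinely depart is the angular integration: instead of invoking (\ref{qC1}) you evaluate $\int\prod_{k=1}^{N-1}\bigl(\det(U\Lambda U^\dagger)_k\bigr)^{-\theta-1}d\mu_U$ directly via Baryshnikov's theorem (the principal-minor eigenvalues of $U\Lambda U^\dagger$ form a uniformly distributed Gelfand--Tsetlin pattern with density $\prod_{k=1}^{N-1}k!/\prod_{j<k}(x_j-x_k)$), the level-by-level substitution $u_i=(\lambda^{(k)}_i)^{-\theta}$, and the unit-parameter Dixon--Anderson (interlacing Vandermonde) integral; this correctly yields $\bigl(\prod_l x_l\bigr)^{-\theta(N-1)}\prod_{j<k}(x_j^\theta-x_k^\theta)/(x_j-x_k)$ up to constants, and there is no integrability issue since every pattern entry is bounded below by $x_N>0$, while the decreasing map $\lambda\mapsto\lambda^{-\theta}$ preserves interlacing. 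The trade-off is clear: your route is self-contained modulo two standard facts but is tied to the geometric progression (\ref{1.0a+}) (the exponents must all collapse to $-\theta-1$ for the change of variables to linearise each level), whereas the spherical-function route of the paper, like \cite{FW17}, delivers the determinantal form (\ref{1.0}) for all $\alpha_k>-1$ in one stroke; your remarked bordering-induction alternative is indeed essentially the method of \cite{Ch14}. Minor loose ends only: take $\theta>0$ (at $\theta=0$ the stated density (\ref{1.0a}) degenerates in any case); the overall constant is automatic since your density is the pushforward of a probability measure, but if you match it explicitly do track the powers of $\theta$ produced by the substitutions --- the Andr\'eief evaluation of the integral of the functional form in (\ref{1.0a}) equals $N!\det[\Gamma(c+i+\theta(j-1))]=\theta^{N(N-1)/2}\prod_{l=1}^N l!\,\Gamma(\theta(l-1)+c+1)$, so the factors of $\theta$ must cancel against those from your level-by-level changes of variables.
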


Forrester and Wang \cite{FW17} extended the computation of the eigenvalue PDF of $W^{(2)}$ to all $\alpha_k > -1$ in the definition of $Y^{(2)}$. Moreover, in the case that the $\{ \alpha_k \}$ are non-negative integers satisfying the inequalities (\ref{1.1x}), random matrices analogous to $\tilde{Y}^{(1)}$ in Proposition \ref{P1x} were also identified.

\begin{proposition}\label{P1.3x}
(Forrester and Wang)
Construct the random upper triangular $Y^{(2)}$ as in Proposition \ref{P1.2x}, and similarly $W^{(2)}$. The eigenvalue PDF of the positive definite random matrix $W^{(2)}$ is
\begin{equation}\label{1.0}
{1 \over N!} {1 \over \prod_{l=1}^N \Gamma(\alpha_l+1)} {1 \over \prod_{1 \le j < k \le N} (\alpha_j - \alpha_k)} \prod_{l=1}^N e^{-x_l} \chi_{x_l > 0} \prod_{1 \le j < k \le N} (x_j - x_k) \det [ x_j^{\alpha_k} ]_{j,k=1}^N.
\end{equation}
Furthermore, with the parameters $\{ \alpha_k \}$ non-negative integers and required to satisfy the inequalities (\ref{1.1x}),
introduce particular $n \times N$ rectangular random matrices $\tilde{Y}^{(2)}$. These are specified by the
requirements first that there is a pattern of zero entries,
\begin{equation}\label{1.1b}
\tilde{Y}_{jk}^{(2)} = 0 \qquad {\rm if} \: \: j > k + \alpha_k,
\end{equation}
as is identical to (\ref{1.1bx}), and second 
that the entries of $\tilde{Y}^{(2)}$ not prescribed to equal zero by (\ref{1.1b}) be independent standard complex Gaussians. One has  that the positive definite random matrix $\tilde{W}^{(2)} := (\tilde{Y}^{(2)})^\dagger \tilde{Y}^{(2)}$ has the same joint element PDF as  $W^{(2)}$, and in particular the eigenvalue PDF of $\tilde{W}^{(2)}$ is given by (\ref{1.0}). 
\end{proposition}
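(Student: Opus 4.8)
The plan is to first determine the joint element PDF of $W^{(2)}$ and then integrate out the eigenvectors to reach (\ref{1.0}). For the first task, take the diagonal entries of $Y^{(2)}$ to be real and positive --- which fixes the otherwise free phases without altering the law of $W^{(2)}$ --- so that the entrywise PDF of $Y^{(2)}$ is $\prod_{k=1}^N \tfrac{2}{\Gamma(\alpha_k+1)}y_{kk}^{2\alpha_k+1}e^{-y_{kk}^2}\prod_{1\le j<k\le N}\tfrac{1}{\pi}e^{-|y_{jk}|^2}$. Since $Y^{(2)}$ is then the upper-triangular Cholesky factor of the positive definite $W^{(2)}$, I change variables with the Jacobian $(dW^{(2)})=2^N\prod_{k=1}^N y_{kk}^{2(N-k)+1}(dY^{(2)})$, using $\operatorname{Tr}W^{(2)}=\sum_{j\le k}|y_{jk}|^2$ and $\det W^{(2)}_i=\prod_{k=1}^i y_{kk}^2$. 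Substituting $y_{kk}^2=\det W^{(2)}_k/\det W^{(2)}_{k-1}$ and telescoping the resulting powers of leading principal minors, the joint element PDF of $W^{(2)}$ comes out proportional to $\bigl(\prod_{i=1}^{N-1}(\det W^{(2)}_i)^{\alpha_i-\alpha_{i+1}-1}\bigr)(\det W^{(2)})^{\alpha_N}e^{-\operatorname{Tr}W^{(2)}}\chi_{W^{(2)}>0}$; this is the complex analogue of (\ref{qW1a})--(\ref{1.7a}) with the real-case exponents no longer halved.

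For the eigenvalue PDF I would diagonalise $W^{(2)}=U\Lambda U^\dagger$, $\Lambda=\operatorname{diag}(x_1,\dots,x_N)$, and use $(dW^{(2)})\propto\prod_{1\le j<k\le N}(x_j-x_k)^2\prod_j dx_j\,d\mu(U)$ with $d\mu$ the invariant measure on $U(N)/U(1)^N$. Since $\det W^{(2)}=\prod_j x_j$ and $\operatorname{Tr}W^{(2)}=\sum_j x_j$ do not depend on $U$, the (unordered) eigenvalue PDF is proportional to $\prod_{j<k}(x_j-x_k)^2\,(\prod_j x_j)^{\alpha_N}e^{-\sum_j x_j}\,I(\Lambda)$, where the remaining ``flag integral'' is $I(\Lambda):=\int_{U(N)/U(1)^N}\prod_{i=1}^{N-1}\bigl(\det(U\Lambda U^\dagger)_i\bigr)^{\alpha_i-\alpha_{i+1}-1}d\mu(U)$. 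The key claim is that $I(\Lambda)$ is, up to a factor depending only on $\{\alpha_k\}$, equal to $\det[x_j^{\alpha_k}]_{j,k=1}^N\big/\bigl(\prod_{j<k}(x_j-x_k)(\prod_j x_j)^{\alpha_N}\bigr)$ --- one checks this directly at $N=2$, where $I(\Lambda)$ is an elementary integral over $\mathbb{CP}^1$ and the factor is $1/(\alpha_1-\alpha_2)$. Granting the claim, the eigenvalue PDF is proportional to $\prod_{j<k}(x_j-x_k)\det[x_j^{\alpha_k}]\prod_l e^{-x_l}$, and the constant is then forced by the Andr\'eief/Selberg-type evaluation $\int_{(0,\infty)^N}\prod_l e^{-x_l}\prod_{j<k}(x_j-x_k)\det[x_j^{\alpha_k}]\,d\boldsymbol x=N!\prod_l\Gamma(\alpha_l+1)\prod_{j<k}(\alpha_j-\alpha_k)$, giving exactly (\ref{1.0}); this contains Proposition \ref{P1.2x} as the special case $\alpha_k=\theta(k-1)+c$.

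The hard part is proving the claim about $I(\Lambda)$ for general $N$, and this is where matrix spherical functions enter: the integrand $\prod_i(\det M_i)^{\alpha_i-\alpha_{i+1}-1}$ restricted to the $U(N)$-orbit of $\Lambda$ is a power function in the sense of the theory surrounding symmetric cones (cf.\ \cite{FK94}), and averaging it over $U(N)/U(1)^N$ collapses to the stated ratio of determinants. A more elementary alternative is induction on $N$: the leading block $W^{(2)}_{N-1}$ is of the same type with parameters $\alpha_1,\dots,\alpha_{N-1}$, and conditionally on it the bordering column of $W^{(2)}$ is a complex Gaussian vector with covariance $W^{(2)}_{N-1}$ while the Schur complement $\det W^{(2)}/\det W^{(2)}_{N-1}$ is an independent $\Gamma[\alpha_N+1,1]$ variable; feeding in the $(N-1)$-variable form of (\ref{1.0}) for the eigenvalues of $W^{(2)}_{N-1}$, writing the interlacing conditional density for the eigenvalues of $W^{(2)}$ from the bordered-matrix secular equation, and collapsing the interlacing integral (a Dixon--Anderson type computation, with the new column $x_j^{\alpha_N}$ produced by a Laplace expansion) reproduces (\ref{1.0}) at level $N$; the base case $N=1$ is immediate since $W^{(2)}=y_{11}^2\sim\Gamma[\alpha_1+1,1]$.

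It remains to handle the rectangular $\tilde Y^{(2)}$, and here I would argue exactly as in the real case underlying Proposition \ref{P1x}. Apply complex Gram--Schmidt to the columns of $\tilde Y^{(2)}$ to get $\tilde Y^{(2)}=QR$ with $Q$ having orthonormal columns and $R$ upper triangular with positive diagonal, so $\tilde W^{(2)}=R^\dagger R$. The zero pattern (\ref{1.1b}) together with the ordering (\ref{1.1x}) makes column $k$ of $\tilde Y^{(2)}$ a standard complex Gaussian vector supported on the first $k+\alpha_k$ coordinates and independent of columns $1,\dots,k-1$, which almost surely span a $(k-1)$-dimensional subspace of that coordinate space; hence $R_{kk}^2\sim\Gamma[(k+\alpha_k)-(k-1),1]=\Gamma[\alpha_k+1,1]$ and the $R_{jk}$ with $j<k$ are independent standard complex Gaussians, i.e.\ $R\stackrel{\mathrm d}{=}Y^{(2)}$. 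Therefore $\tilde W^{(2)}=R^\dagger R\stackrel{\mathrm d}{=}(Y^{(2)})^\dagger Y^{(2)}=W^{(2)}$, so the two share the joint element PDF and in particular the eigenvalue PDF (\ref{1.0}). I expect this final step to be routine; the crux of the proposition is the evaluation of the flag integral $I(\Lambda)$ (equivalently, the inductive step just described).
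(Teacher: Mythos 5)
Your architecture is the same as the paper's implicit route: you first derive the joint element PDF of $W^{(2)}$ by fixing the diagonal phases, applying the Bartlett-type Jacobian and telescoping the diagonal entries into ratios of leading minors --- this is verbatim the paper's proof of Proposition \ref{P1}, and your result agrees with \eqref{1.5a}--\eqref{1.17b}. Your treatment of the rectangular case via complex Gram--Schmidt is also sound: the ordering \eqref{1.1x} guarantees the earlier columns lie in the coordinate subspace supporting column $k$, so $R_{kk}^2\sim\Gamma[\alpha_k+1,1]$ and the $R_{jk}$, $j<k$, are independent standard complex Gaussians, giving $R\mathop{=}\limits^{\rm d}Y^{(2)}$ and hence $\tilde W^{(2)}\mathop{=}\limits^{\rm d}W^{(2)}$. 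The normalisation you quote is also correct (Andr\'eief plus $\det[\Gamma(\alpha_l+k)]=\prod_l\Gamma(\alpha_l+1)\prod_{j<k}(\alpha_k-\alpha_j)$).

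The genuine gap is the step you yourself flag: the evaluation of the ``flag integral'' $I(\Lambda)$, which is exactly the unitary group integral \eqref{qC1}. As written you verify it only for $N=2$ and then assume it (``granting the claim''), offering two completions --- symmetric-cone/spherical-function theory, or an induction on $N$ via the Schur complement and a Dixon--Anderson-type interlacing computation --- neither of which is carried out. That integral is the entire analytic content of passing from \eqref{1.5a} to \eqref{1.0}: the determinantal form $\det[x_j^{\alpha_k}]$ and the factor $\prod_{j<k}(\alpha_j-\alpha_k)$ come precisely from it, and the inductive route is not routine (the bordering vector has covariance $W^{(2)}_{N-1}$ rather than the identity, so the conditional eigenvalue density is a genuinely new computation, essentially the core of Forrester--Wang's argument, and of Cheliotis's for the special parameters). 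The paper closes this gap by identifying \eqref{qC1} as the known Gelfand--Na\u{\i}mark spherical function evaluation \cite{GN57} (the proposition itself being cited to \cite{FW17}); so either invoke that classical formula explicitly, or execute your induction in full --- writing down the conditional density of the eigenvalues of the bordered matrix given those of $W^{(2)}_{N-1}$ and performing the interlacing integral --- before the claimed form \eqref{1.0} can be regarded as proved.
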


\begin{remark}\label{R1.7}
In the case $\alpha_l = n - l$ all entries of $\tilde{Y}^{(2)}$ are non-zero and distributed as independent standard complex Gaussians. Then $\tilde{W}^{(2)}$ is by definition an uncorrelated complex Wishart matrix (see e.g.~\cite[Def.~3.2.1]{Fo10}). The corresponding eigenvalue PDF is then given by the functional form proportional to
(see e.g.~\cite[Prop.~3.2.2 with $\beta = 2$]{Fo10})
\begin{equation}\label{1.1c+}
\prod_{l=1}^N x_l^{n-N} e^{-x_l} \chi_{x_l > 0}
\prod_{1 \le j < k \le N}(x_k - x_j)^2.
\end{equation}
This is consistent with the corresponding specialisation of (\ref{1.0}) upon recalling the Vandermonde determinant identity
\begin{equation}\label{1.1d+}
\det [x_j^{k-1}]_{j,k=1,\dots,N} = \prod_{1 \le j < k \le N} (x_k - x_j).
\end{equation}
One notes too that in this case the eigenvalues of $\tilde{W}^{(2)}$ are the squared singular values of the rectangular complex Gaussian matrix $\tilde{Y}^{(2)}$ (also known as a rectangular GinUE matrix; for more on this viewpoint see \cite[\S 6.3]{BF22a}).
\end{remark}

Extending the pioneering work of Borodin \cite{Bo98}, there is now an extensive body of literature on the statistical properties of the eigenvalues specified by (\ref{1.0a}). As a non-exhaustive list we reference \cite{CR14,Zh15, FL15,FW17,FI18,La18,CGS19,CC21,CLM21,Mo21,FL22,WZ22}. Of these we highlight two results which we will show below have companions for the analogous cases of the matrices $W^{(1)}$ and $\tilde{W}^{(1)}$. The first actually applies to the full parameter range of the random matrices $\tilde{W}^{(2)}$ as specified in Proposition \ref{P1.3x}. Thus for such random matrices, we have from \cite[Corollary 3.7]{FI18} that the averaged characteristic polynomial has the explicit form
\begin{equation}\label{1.1e+}
\Big \langle \det(x\mathbb I_N - \tilde{W}^{(2)}) \Big \rangle =
\sum_{\nu=0}^N x^\nu \sum_{k=0}^\nu {(-1)^{N-k} \over (\nu - k)! k!}
\prod_{l=1}^N (\alpha_l + k + 1).
\end{equation}
In relation to the second, which is particular to (\ref{1.0a}) and thus the specialisation of parameters (\ref{1.0a+}), for $\theta \in \mathbb R^+$ introduce the sequence
\begin{equation}\label{1.1f+}
C_\theta(k) = {1 \over \theta k + 1}
\binom{\theta k + k}{k}, \qquad k=0,1,2,\dots
\end{equation}
This sequence, for $\theta \in \mathbb Z^+$, forms an integer sequence referred to as the Fuss-Catalan sequence. Moreover, for general $\theta>0$, $\{C_\theta(k)\}$ are known to be moments of a PDF, referred to as the Fuss-Catalan density, and moreover they uniquely determine the PDF \cite{Ml10,BBCC11}. The point of interest is that upon the change of variables $\lambda_l \mapsto (\lambda_l/(N \theta)^\theta)^\theta$, the normalised eigenvalue density of (\ref{1.0a}) is given by the Fuss-Catalan density; see \cite[Prop.~3.1]{FW17}, which interprets results from \cite{CR14}.

\subsection{Extension of the theory}\label{S1.1}

Comparison of the above account of the development of the two strands of random matrix theory that form out theme 
makes it evident that aspects in the theory of the real case relative to the complex case are incomplete, and vice versa. In particular, missing from the literature are results for

\begin{enumerate}
    \item The joint element PDF of $W^{(2)}, \tilde{W}^{(2)}$.
    \item The joint element PDF of the correlation matrices corresponding to $W^{(2)}, \tilde{W}^{(2)}$.
    \item The eigenvalue PDF of $W^{(1)}, \tilde{W}^{(1)}$.
    \item Evaluation of the characteristic polynomial 
    and  the normalised eigenvalue density
    for $\tilde{W}^{(1)}$.
\end{enumerate}

Our main achievements in this paper are to extend the existing theory by specifying the missing results.

\subsection*{Result (1)}
Our first result is for the functional form of the joint element PDF of $W^{(2)}$, which from the final statement of Proposition \ref{P1.3x} includes that for $\tilde{W}^{(2)}$.

\begin{proposition}\label{P1}
Define
$$
\mathcal N_N^{(2)}(\boldsymbol \alpha) := \pi^{N(N-1)/2} \prod_{i=1}^N \Gamma(\alpha_i + 1)
$$
and
\begin{equation}\label{1.17b}
q^{(2)}_{\boldsymbol \alpha}({W}^{(2)}) :=
\prod_{i=1}^{N-1} ( \det ({W}_i^{(2)}))^{\alpha_i - \alpha_{i+1} - 1}
( \det {W}^{(2)} )^{\alpha_N},
\end{equation}
where ${W}_i^{(2)}$ denotes the leading $i \times i$ sub-matrix of ${W}^{(2)}$. One has that the joint element PDF of ${W}^{(2)}$ is equal to
\begin{equation}\label{1.5a}
{1 \over \mathcal N_N^{(2)}(\boldsymbol \alpha)} q^{(2)}_{\boldsymbol \alpha}({W}^{(2)}) e^{- {\rm Tr} \, {W}^{(2)}}\chi_{{W}^{(2)}>0}.
\end{equation}
\end{proposition}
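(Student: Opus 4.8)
The plan is to exploit the fact that, since $Y^{(2)}$ is upper triangular, the map $Y^{(2)} \mapsto W^{(2)} = (Y^{(2)})^\dagger Y^{(2)}$ is essentially the inverse of the Cholesky decomposition on the cone of positive definite Hermitian matrices. First I would remove the phases of the diagonal entries: writing $Y^{(2)} = D \hat{Y}$ with $D$ the diagonal unitary matrix carrying those phases, one has $W^{(2)} = \hat{Y}^\dagger \hat{Y}$, where $\hat{Y}$ is upper triangular with positive real diagonal entries $s_k := |y_{k,k}|$, so that $s_k^2 \stackrel{\rm d}{=} \Gamma[\alpha_k+1,1]$, and with strictly upper triangular entries $\hat{Y}_{jk} = \bar{D}_{jj} y_{jk}$ that remain independent standard complex Gaussians and are independent of $D$. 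Hence we may assume from the outset that the diagonal of $Y^{(2)}$ consists of the positive reals $s_1,\dots,s_N$. Transforming the gamma density of $s_k^2$ to the variable $s_k>0$ gives $\tfrac{2}{\Gamma(\alpha_k+1)} s_k^{2\alpha_k+1} e^{-s_k^2}$, and multiplying by the densities $\tfrac{1}{\pi} e^{-|y_{jk}|^2}$ of the strictly upper triangular entries, together with the identity $\sum_k s_k^2 + \sum_{j<k} |y_{jk}|^2 = {\rm Tr}\, W^{(2)}$, shows that the joint element PDF of $Y^{(2)}$ is
\begin{equation*}
\frac{2^N}{\mathcal N_N^{(2)}(\boldsymbol\alpha)} \Big( \prod_{k=1}^N s_k^{2\alpha_k+1} \Big) e^{-{\rm Tr}\, W^{(2)}} .
\end{equation*}

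Next I would change variables to $W^{(2)}$. The one genuinely computational input is the Jacobian of the complex Cholesky map: with the flat measures $\prod_k ds_k \prod_{j<k} d\,{\rm Re}\,y_{jk}\, d\,{\rm Im}\,y_{jk}$ and $\prod_i dW^{(2)}_{ii}\prod_{i<j} d\,{\rm Re}\,W^{(2)}_{ij}\, d\,{\rm Im}\,W^{(2)}_{ij}$ on the two sides, one has $dW^{(2)} = 2^N \prod_{k=1}^N s_k^{2(N-k)+1}\, dY^{(2)}$. This is the complex counterpart of the classical real identity $dW = 2^N \prod_k s_k^{N-k+1}\,dR$ and is proved in the same way, the relevant Jacobian matrix becoming lower triangular once the matrix entries on both sides are listed by columns; the diagonal entries of that triangular matrix are $2 s_k$ coming from $\partial W^{(2)}_{kk}/\partial s_k$ and $s_j$ (contributing $s_j^2$ as a complex variable) coming from $\partial W^{(2)}_{jk}/\partial y_{jk}$ for $j<k$. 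Dividing the density above by this Jacobian, the powers of $s_k$ combine to $\prod_{k=1}^N s_k^{2\alpha_k+1-2(N-k)-1} = \prod_{k=1}^N (s_k^2)^{\alpha_k-N+k}$, so the joint element PDF of $W^{(2)}$ equals
\begin{equation*}
\frac{1}{\mathcal N_N^{(2)}(\boldsymbol\alpha)} \Big( \prod_{k=1}^N (s_k^2)^{\alpha_k-N+k} \Big) e^{-{\rm Tr}\, W^{(2)}}\, \chi_{W^{(2)}>0} ,
\end{equation*}
the positivity constraint being the image of the (bijective) Cholesky map.

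Finally I would rewrite the diagonal contribution in terms of the leading principal minors. Because $Y^{(2)}$ is upper triangular, its leading $i\times i$ block $Y^{(2)}_i$ satisfies $W^{(2)}_i = (Y^{(2)}_i)^\dagger Y^{(2)}_i$, whence $\det W^{(2)}_i = \prod_{k=1}^i s_k^2$ and therefore $s_k^2 = \det W^{(2)}_k/\det W^{(2)}_{k-1}$ with the convention $\det W^{(2)}_0 = 1$. Substituting this telescoping relation into $\prod_{k=1}^N (s_k^2)^{\alpha_k-N+k}$ and collecting the exponent of $\det W^{(2)}_i$ yields $(\alpha_i-N+i)-(\alpha_{i+1}-N+i+1) = \alpha_i-\alpha_{i+1}-1$ for $1 \le i \le N-1$ and $\alpha_N$ for $i=N$, which is precisely $q^{(2)}_{\boldsymbol\alpha}(W^{(2)})$ as defined in (\ref{1.17b}). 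Combining the three steps gives (\ref{1.5a}). I expect the main obstacle to be a careful proof of the complex Cholesky Jacobian — keeping track of the factors of $2$ attached to the real diagonal variables versus the complex off-diagonal ones, and verifying the triangular structure under the column ordering; everything else is bookkeeping, and this step may alternatively be quoted from standard references on the complex Wishart distribution.
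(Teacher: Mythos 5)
Your proposal is correct and follows essentially the same route as the paper's own proof: the change of variables from the upper triangular matrix with positive real diagonal to $W^{(2)}=(Y^{(2)})^\dagger Y^{(2)}$ via the Jacobian $(dW^{(2)})=2^N\prod_{j=1}^N y_{jj}^{2(N-j)+1}(dY^{(2)})$ (which the paper quotes from \cite[Prop.~3.2.6]{Fo10} rather than re-deriving), followed by rewriting $\prod_j y_{jj}^{2(\alpha_j-N+j)}$ through the minor identity $\det W_i^{(2)}=\prod_{j=1}^i y_{jj}^2$ to obtain $q^{(2)}_{\boldsymbol\alpha}(W^{(2)})$. Your explicit treatment of the diagonal phases and the telescoping of exponents are careful elaborations of steps the paper states more briefly, so there is nothing missing.
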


\begin{remark}
Consider the matrix $\tilde{W}^{(2)} = (\tilde{Y}^{(2)})^\dagger \tilde{Y}^{(2)}$ in the case $\alpha_l = n - l$. As noted in Remark \ref{R1.7}, all entries of the $n \times N$ random matrix $\tilde{Y}^{(2)}$ are independent standard complex Gaussians, and $\tilde{W}^{(2)}$ is an uncorrelated complex Wishart matrix.
 A fundamental result in the theory of Wishart matrices specifies the joint element PDF of $\tilde{W}^{(2)}$ as proportional to \cite[Prop.~3.2.7 with $\beta = 2$]{Fo10}
\begin{equation}\label{1.1c}
(\det \tilde{W}^{(2)})^{n-N} e^{- {\rm Tr} \, \tilde{W}^{(2)}} 
\chi_{\tilde{W}^{(2)}>0},
\end{equation}
which is in agreement with the result implied by Proposition \ref{P1}.
\end{remark}

\subsection*{Result (2)}
Our second result is for the functional form of the joint element PDF of the correlation matrix corresponding to $W^{(2)}$.

\begin{proposition}\label{P1y+} 
Define 
\begin{equation}\label{C1r}
{C}^{(2)} := B^{-1/2} {W}^{(2)} B^{-1/2}, \qquad
B = {\rm diag} \, ({W}^{(2)}_{11},{W}^{(2)}_{22},\dots,{W}^{(2)}_{NN}),
\end{equation}
with ${W}^{(2)}$ as in Proposition \ref{P1.3x}. The joint element PDF of $C^{(1)}$ is given by
\begin{equation}\label{qW1C}
{\prod_{l=1}^N \Gamma(\alpha_l + l) \over \mathcal N_N^{(2)}(\boldsymbol \alpha)} q^{(2)}_{\boldsymbol \alpha}({C}^{(2)}) \chi_{C^{(2)}_{ii} =1 \,(i=1,\dots,N), \,{C}^{(2)}>0}.
\end{equation}

\end{proposition}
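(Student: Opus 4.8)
The plan is to obtain (\ref{qW1C}) as a marginal of the joint element PDF of $W^{(2)}$ supplied by Proposition \ref{P1}: I would change variables from $W^{(2)}$ to the pair consisting of the diagonal matrix $B = \mathrm{diag}(W^{(2)}_{11},\dots,W^{(2)}_{NN})$ and the correlation matrix $C^{(2)} = B^{-1/2}W^{(2)}B^{-1/2}$ defined in (\ref{C1r}), and then integrate out $B$. This mirrors the strategy behind Veleva's real-case result, Proposition \ref{P1y}, with the Jacobian and the exponents adjusted for the complex Hermitian setting.

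First I would note that $W^{(2)} \mapsto \big(b_1,\dots,b_N,\{C^{(2)}_{jk}\}_{1\le j<k\le N}\big)$ with $b_l = W^{(2)}_{ll}$ is a bijection from $\{W^{(2)}>0\}$ onto $(0,\infty)^N \times \{C^{(2)}>0 : C^{(2)}_{ll}=1\}$, with inverse $W^{(2)} = B^{1/2}C^{(2)}B^{1/2}$; indeed $W^{(2)}>0$ forces all $b_l>0$, whereupon congruence gives $C^{(2)}>0$, and conversely. Since $B$ is diagonal, $W^{(2)}_{jk} = \sqrt{b_j b_k}\,C^{(2)}_{jk}$ for $j\ne k$, so the Jacobian matrix of the inverse map is block triangular: the $N$ real diagonal coordinates are mapped by the identity, while each strictly-upper-triangular complex entry is scaled by $\sqrt{b_j b_k}$, contributing $(\sqrt{b_j b_k})^2 = b_j b_k$ through its real and imaginary parts. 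Hence the Jacobian equals $\prod_{1\le j<k\le N} b_j b_k = \prod_{l=1}^N b_l^{N-1}$.

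Next I would re-express $q^{(2)}_{\boldsymbol\alpha}(W^{(2)})$ from (\ref{1.17b}) in the new variables. Because $B$ is diagonal, the leading $i\times i$ submatrix of $W^{(2)} = B^{1/2}C^{(2)}B^{1/2}$ is $B_i^{1/2}C^{(2)}_iB_i^{1/2}$, so $\det W^{(2)}_i = \big(\prod_{j=1}^i b_j\big)\det C^{(2)}_i$. Substituting and collecting, the exponent of $b_l$ arising from the determinants telescopes to $\sum_{i=l}^{N-1}(\alpha_i-\alpha_{i+1}-1) + \alpha_N = \alpha_l + l - N$, giving $q^{(2)}_{\boldsymbol\alpha}(W^{(2)}) = q^{(2)}_{\boldsymbol\alpha}(C^{(2)})\prod_{l=1}^N b_l^{\alpha_l+l-N}$. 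Also $\mathrm{Tr}\,W^{(2)} = \sum_l W^{(2)}_{ll} = \sum_l b_l$ since $C^{(2)}_{ll}=1$. Combining these with the Jacobian, the PDF (\ref{1.5a}) transforms to $\tfrac{1}{\mathcal N_N^{(2)}(\boldsymbol\alpha)}\,q^{(2)}_{\boldsymbol\alpha}(C^{(2)})\,\chi_{C^{(2)}>0,\,C^{(2)}_{ll}=1}\prod_{l=1}^N b_l^{\alpha_l+l-1}e^{-b_l}$, the surplus $\prod_l b_l^{N-1}$ from the Jacobian having merged with $\prod_l b_l^{\alpha_l+l-N}$.

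Finally I would integrate each $b_l$ over $(0,\infty)$ using $\int_0^\infty b^{\alpha_l+l-1}e^{-b}\,db = \Gamma(\alpha_l+l)$, which is legitimate since $\alpha_l>-1$ and $l\ge 1$ force $\alpha_l+l>0$; this produces the prefactor $\prod_{l=1}^N\Gamma(\alpha_l+l)$ and leaves exactly (\ref{qW1C}). The only real obstacle is the bookkeeping — fixing the complex Hermitian Jacobian (the factor $b_jb_k$ per off-diagonal entry rather than $\sqrt{b_jb_k}$ as in the real case) and carrying the telescoping exponent count through $q^{(2)}_{\boldsymbol\alpha}$ cleanly; everything else is a routine change of variables followed by Gamma integrals.
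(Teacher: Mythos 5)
Your proposal is correct and follows essentially the same route as the paper: a change of variables from $W^{(2)}$ to the diagonal entries $b_l$ and the correlation matrix $C^{(2)}$, with Jacobian $\prod_l b_l^{N-1}$, the telescoped exponent $\alpha_l+l-N$ in $q^{(2)}_{\boldsymbol\alpha}$, and a final Gamma integration over each $b_l$ producing $\prod_l \Gamma(\alpha_l+l)$. The only difference is that you spell out the derivation of the complex Hermitian Jacobian, which the paper states without detail.
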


\begin{remark}
In the case $\alpha_i = n - i$ the result (\ref{qW1A}) can be found in \cite[Remark 9.2]{FZ20}.
\end{remark}

\subsection*{Result (3)}
We turn our attention now to the eigenvalue PDF implied by the joint element PDF (\ref{qW1a}) for $W^{(1)}$. To obtain an explicit functional form, we must restrict attention to $\tilde{W}^{(1)}$.
Then $\{\alpha_k\}$ are distinct non-negative integers ordered from $\alpha_1$ the smallest to $\alpha_N$ the largest as is compatible with (\ref{1.1x}). We set
$\alpha_k = \gamma_{N+1-k} + k - 1$ so that by (\ref{1.1x}) the array
\begin{equation}\label{gat}
\boldsymbol \gamma = (\gamma_1,\dots, \gamma_N)
\end{equation}
is such that the entries are weakly decreasing and so $\boldsymbol \gamma$ forms a partition.

\begin{proposition}\label{P1.4}
    Let $\{ \alpha_k \}$ be distinct non-negative integers, and define $\boldsymbol \gamma$ as in (\ref{gat}) and surrounding text. Impose the further condition that the parts of $\boldsymbol \gamma$ all have the same parity.
    Let $\mathcal Y_{\boldsymbol \kappa}(x_1,\dots,x_N)$ denote the zonal polynomial of mathematical statistics (see \cite[Ch.~7]{Mu82}). For this case of the joint element PDF (\ref{qW1a}), the eigenvalue PDF is
equal to
\begin{equation}\label{qW1c+}
c_N { 2^{-\sum_{j=1}^N(\gamma_j - 1)/2}
\over \prod_{j=1}^N \Gamma((\alpha_j+1)/2)}
{\mathcal Y_{(\boldsymbol \gamma - 1 + s)/2}(x_1,\dots,x_N) 
\over 
\mathcal Y_{(\boldsymbol \gamma - 1 + s)/2}((1)^N)}
\prod_{l=1}^{N} x_l^{-s/2} e^{-x_l/2} \chi_{x_l > 0}
\prod_{1 \le j < k \le N} | x_k - x_j|,
\end{equation}
where $\mathcal Y_{ (\boldsymbol \gamma - 1 + s)/2}((1)^N):= {\mathcal Y}_{(\boldsymbol \gamma - 1 + s)/2}(x_1,\dots,x_N)|_{x_1=\cdots=x_N = 1}$ and 
\begin{equation}\label{1.12a}
c_N = 2^{-N(N+1)/2}
\prod_{j=1}^N (\Gamma(3/2)/\Gamma(1+j/2)).
\end{equation}
Also $(\boldsymbol \gamma - 1+ s)/2$ denotes the partition $\boldsymbol \gamma$ with each part having $(-1+s)$ added and then divided by 2, with $s=0$ for all parts of $\boldsymbol{\gamma}$ odd, while $s=1$ for all parts of $\boldsymbol{\gamma}$ even.
\end{proposition}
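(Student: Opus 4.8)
The plan is to obtain the eigenvalue PDF from the joint element PDF (\ref{qW1a}) by integrating out the eigenvectors. Writing $\tilde W^{(1)} = H\Lambda H^T$ with $\Lambda={\rm diag}(x_1,\dots,x_N)$ and $H$ ranging over $O(N)$ (Haar measure of unit mass), the Weyl integration formula for real symmetric matrices gives $(d\tilde W^{(1)})=b_N\prod_{1\le j<k\le N}|x_k-x_j|\,(dx)\,(dH)$ for a constant $b_N$ that the Weyl formula makes explicit. Since ${\rm Tr}\,H\Lambda H^T=\sum_l x_l$ is a class function, the eigenvalue PDF is
\[
P(x_1,\dots,x_N)=\frac{b_N}{\mathcal N_N^{(1)}(\boldsymbol\alpha)}\prod_{1\le j<k\le N}|x_k-x_j|\prod_{l=1}^N e^{-x_l/2}\chi_{x_l>0}\cdot I(\Lambda),\qquad I(\Lambda):=\int_{O(N)}q^{(1)}_{\boldsymbol\alpha}(H\Lambda H^T)\,(dH),
\]
so everything reduces to evaluating the group integral $I(\Lambda)$.

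The key step is to read $q^{(1)}_{\boldsymbol\alpha}$ as a generalised power function (highest weight vector): from (\ref{1.7a}), $q^{(1)}_{\boldsymbol\alpha}(W)=\Delta_{\boldsymbol\kappa}(W):=\prod_{i=1}^N(\det W_i)^{\kappa_i-\kappa_{i+1}}$ with $\kappa_{N+1}:=0$ and $\kappa_i=(\alpha_i+i-N-1)/2$. Then $I(\Lambda)$ is the $O(N)$-spherical function attached to $\boldsymbol\kappa$: for a partition index it is the normalised zonal polynomial, $\int_{O(N)}\Delta_{\boldsymbol\lambda}(H\Lambda H^T)\,(dH)=\mathcal Y_{\boldsymbol\lambda}(x_1,\dots,x_N)/\mathcal Y_{\boldsymbol\lambda}((1)^N)$ (James; see \cite[Ch.~7]{Mu82}), and in general it is unchanged under the $\rho$-shifted action of the Weyl group $S_N$, $\boldsymbol\kappa\mapsto\sigma(\boldsymbol\kappa+\tfrac12\boldsymbol\delta)-\tfrac12\boldsymbol\delta$ with $\boldsymbol\delta=(N-1,N-2,\dots,1,0)$. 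The parity hypothesis is exactly what tames this: one computes $(\boldsymbol\kappa+\tfrac12\boldsymbol\delta)_i=(\alpha_i-1)/2$; since the $\alpha_i$ are increasing, sorting into the dominant chamber and undoing the shift gives, via $\alpha_{N+1-j}=\gamma_j+N-j$, the sequence $\bigl((\gamma_j-1)/2\bigr)_j$, which under the equal-parity condition on $\boldsymbol\gamma$ equals $\boldsymbol\mu-\tfrac{s}{2}(1)^N$ with $\boldsymbol\mu=(\boldsymbol\gamma-1+s)/2$ a genuine partition of non-negative integers (the common fractional part being absorbed into the $(1)^N$-shift). Hence the spherical function collapses to a single zonal polynomial up to a determinant power,
\[
I(\Lambda)=(\det\Lambda)^{-s/2}\,\frac{\mathcal Y_{\boldsymbol\mu}(x_1,\dots,x_N)}{\mathcal Y_{\boldsymbol\mu}((1)^N)}=\Bigl(\prod_{l=1}^N x_l\Bigr)^{-s/2}\frac{\mathcal Y_{\boldsymbol\mu}(x_1,\dots,x_N)}{\mathcal Y_{\boldsymbol\mu}((1)^N)};
\]
without the parity condition the right-hand side would be a genuinely matrix-valued spherical function, i.e.\ a non-trivial combination of zonal polynomials, which is the tie-in flagged in the abstract.

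Substituting this into the formula for $P$ and using $(\det\Lambda)^{-s/2}=\prod_l x_l^{-s/2}$ reproduces the functional form (\ref{qW1c+}), leaving only the constant. From (\ref{q1W}), $\mathcal N_N^{(1)}(\boldsymbol\alpha)=2^{\sum_j(\alpha_j+1)/2}(2\pi)^{N(N-1)/4}\prod_j\Gamma((\alpha_j+1)/2)$; combining this with $\alpha_{N+1-j}=\gamma_j+N-j$ (whence $\sum_j(\alpha_j+1)/2-\sum_j(\gamma_j-1)/2=N(N-1)/4+N$), with the value of $b_N$, and with $\Gamma_N(N/2)=\pi^{N(N-1)/4}\prod_{k=1}^N\Gamma(k/2)$, one finds $b_N/\mathcal N_N^{(1)}(\boldsymbol\alpha)=c_N\,2^{-\sum_j(\gamma_j-1)/2}/\prod_j\Gamma((\alpha_j+1)/2)$ with $c_N$ as in (\ref{1.12a}). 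A convenient cross-check on all constants: when $\alpha_j=n-j$ (the classical real Wishart case) $\boldsymbol\gamma$ is the rectangular partition $((n-N)^N)$, the zonal polynomial of a rectangular partition is proportional to $(\det\Lambda)^{\lfloor(n-N)/2\rfloor}$, and (\ref{qW1c+}) reduces to the Laguerre $\beta=1$ eigenvalue PDF proportional to $\prod_l x_l^{(n-N-1)/2}e^{-x_l/2}\chi_{x_l>0}\prod_{j<k}|x_k-x_j|$.

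The main obstacle is the middle step: showing that, for the particular index $\boldsymbol\kappa$ coming from $q^{(1)}_{\boldsymbol\alpha}$ and under the equal-parity hypothesis on $\boldsymbol\gamma$, the $O(N)$-average of the generalised power function is exactly $(\det\Lambda)^{-s/2}$ times the single normalised zonal polynomial $\mathcal Y_{\boldsymbol\mu}$, with no residual terms and with the correct normalisation — this is precisely where the theory of matrix spherical functions for $(GL_N(\mathbb R),O(N))$ is needed, the parity condition being the condition that forces the spherical function to be scalar rather than matrix-valued. An alternative, more hands-on route to the same identity is to use Veleva's upper-triangular model for $W^{(1)}$ and integrate out the $\Gamma$-distributed diagonal entries one at a time, matching the result against the Pieri/branching rule for zonal polynomials; this avoids invoking the general functional equation at the cost of the same bookkeeping.
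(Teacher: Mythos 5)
Your argument is correct and its core is the same as the paper's: decompose $(dW^{(1)})$ into eigenvalue and eigenvector measures, reduce everything to the $O(N)$ average of the generalised power function $q^{(1)}_{\boldsymbol\alpha}$, use a symmetry in $\{\alpha_k\}$ to replace the index by the dominantly ordered one, identify the resulting average with the normalised zonal polynomial $\mathcal Y_{(\boldsymbol\gamma-1+s)/2}/\mathcal Y_{(\boldsymbol\gamma-1+s)/2}((1)^N)$ (James), and absorb the half-integer shift of the index into the factor $\prod_l x_l^{-s/2}$ via $\tilde q^{(1)}_{\boldsymbol\kappa+c}(X)=(\det X)^c\tilde q^{(1)}_{\boldsymbol\kappa}(X)$. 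You differ from the paper in two sub-steps. First, you justify the reordering by the $\rho$-shifted Weyl-group invariance of the spherical function (your computation $(\boldsymbol\kappa+\tfrac12\boldsymbol\delta)_i=(\alpha_i-1)/2$ is correct and the shift is the right one), whereas the paper deduces the same symmetry elementarily, from the real analogue of Lemma \ref{L1}: permuting rows and columns of the triangular matrix $Y^{(1)}$ shows the eigenvalue PDF, hence the group integral, is symmetric in $\{\alpha_k\}$; your route imports more harmonic-analysis machinery, the paper's stays inside the random-matrix construction. Second, for the constant you take the Jacobian prefactor $b_N$ as explicitly known from the Weyl/Muirhead formula and push the arithmetic through, whereas the paper leaves $k_N$ undetermined and fixes it a posteriori by the special case $\gamma_j=1$ (i.e.\ $\alpha_k=k$), where the zonal ratio is $1$, together with the known $\beta=1$ Laguerre normalisation integral. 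Both are legitimate; your bookkeeping does check out (the value of $b_N$ forced by your identity is $\pi^{N(N+1)/4}/(N!\prod_{j=1}^N\Gamma(j/2))$, which is indeed Muirhead's constant for unordered eigenvalues), though you never actually display $b_N$, so that step should be written out to be complete. Two minor quibbles: your aside that without the parity condition the average ``would be a genuinely matrix-valued spherical function'' is not accurate --- it is still a scalar spherical (Heckman--Opdam hypergeometric) function, just not a polynomial/zonal one; and in your cross-check $\alpha_j=n-j$ must first be relabelled in increasing order before the formula $\alpha_k=\gamma_{N+1-k}+k-1$ yields the rectangular partition $((n-N)^N)$.
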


\begin{remark} $ $ 
 \\
1.~(The case $N=2$.) Define the Legendre function of the first kind by
\begin{equation}\label{Lk}
P_\rho(u)  = {1 \over 2 \pi} \int_0^{2 \pi} (u + (u^2 - 1)^{1/2} \cos t )^\rho \, dt =
 {}_2 F_1(\rho+1,-\rho,1;(1-u)/2).
\end{equation}
For $\rho$ a non-negative integer this is a polynomial of degree $\rho$. It is shown in \cite[Eq.~(7.9)]{Ja68} that
\begin{equation}
 \mathcal Y_{(\kappa_1,\kappa_2)}(x_1,x_2) / \mathcal Y_{(\kappa_1,\kappa_2)}(1,1)
= (x_1x_2)^{(\kappa_1+\kappa_2)/2} 
P_{\kappa_1 - \kappa_2}\Big ( {1 \over 2} (x_1 + x_2)/(x_1x_2)^{1/2} \Big ).
\end{equation}
For $N = 2$, and upon use of the hypergeometric form of the Legendre function given in (\ref{Lk}), this extends the functional form (\ref{qW1c+}) to general $\alpha_1, \alpha_2 > -1$. Moreover, this form exhibits the symmetry required of (\ref{qW1c+}) of being unchanged with respect to the interchange of $\alpha_1$ and $\alpha_2$. \\
2.~(The case of a single non-zero part.) Results from \cite{vN41,Ja64} give that 
\begin{equation}
\prod_{l=1}^N (1 - t x_l)^{-1/2} = \sum_{k=0}^\infty {(1/2)_k \over k!}
\mathcal Y_{(k,0,\dots,0)}(x_1,\dots,x_N) t^k.
\end{equation}
This implies an explicit contour integral formula for the zonal polynomials in the case of a single part. We remark that in some recent literature the latter has been referred to as the rank one case
\cite{MP20}. In fact it is possible to analyse the scaled large $N$ limit in this setting, and to obtain phase transition effects for the largest eigenvalue; see the recent review \cite[\S 2.4]{Fo23}.

\end{remark}

\subsection*{Result (4)}
Finally we remark on the evaluation of the characteristic polynomial and the large $N$ limit of the normalised eigenvalue density for $\tilde{W}^{(1)}$.

\begin{proposition}\label{P4x}
    Consider the random matrix $\tilde{W}^{(1)}$ defined as in Proposition \ref{P1x}. For the averaged characteristic polynomial, the functional form (\ref{1.1e+}) again holds true,
\begin{equation}\label{1.1e+2}
\Big \langle \det(x\mathbb I_N - \tilde{W}^{(1)}) \Big \rangle =
\sum_{\nu=0}^N x^\nu \sum_{k=0}^\nu {(-1)^{N-k} \over (\nu - k)! k!}
\prod_{l=1}^N (\alpha_l + k + 1).
\end{equation}
Also, for the specialisation of the parameters (\ref{1.0a+}) in the definition of $\tilde{W}^{(1)}$, 
upon the change of variables $\lambda_l \mapsto (\lambda_l/(N \theta)^\theta)^\theta$, the normalised eigenvalue density of $\tilde{W}^{(1)}$ is given by the Fuss-Catalan density as specified by the moments (\ref{1.1f+}).
\end{proposition}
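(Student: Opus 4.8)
\emph{The averaged characteristic polynomial.} The plan is to set up a recursion in $N$ using the triangular realisation of Proposition \ref{P1x}. Write $P^{\boldsymbol\alpha}_N(x):=\langle\det(x\mathbb I_N-\tilde W^{(1)})\rangle$, realised with $\tilde W^{(1)}=(Y^{(1)})^TY^{(1)}$ for $Y^{(1)}$ the $N\times N$ upper triangular matrix of Proposition \ref{P1x} with parameters $\boldsymbol\alpha=(\alpha_1,\dots,\alpha_N)$. Since $Y^{(1)}$ is square, $\det(x\mathbb I_N-(Y^{(1)})^TY^{(1)})=\det(x\mathbb I_N-Y^{(1)}(Y^{(1)})^T)$, and I would work with the latter. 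Split $Y^{(1)}$ into the scalar $y_{11}$, the strictly upper first row $\boldsymbol v^T$ (independent standard real Gaussians), and the $(N-1)\times(N-1)$ upper triangular block $\tilde Y$; these are independent, and $\tilde Y$ is the matrix of Proposition \ref{P1x} for size $N-1$ with parameters $(\alpha_2,\dots,\alpha_N)$. The Schur complement formula applied to $x\mathbb I_N-Y^{(1)}(Y^{(1)})^T$ relative to its bottom-right block gives, for generic $x$,
\[
\det(x\mathbb I_N-Y^{(1)}(Y^{(1)})^T)=\det(x\mathbb I_{N-1}-\tilde Y\tilde Y^T)\Big[x-y_{11}^2-\boldsymbol v^T\boldsymbol v-\boldsymbol v^T\tilde Y^T(x\mathbb I_{N-1}-\tilde Y\tilde Y^T)^{-1}\tilde Y\boldsymbol v\Big].
\]
Averaging over $\boldsymbol v$ with $\langle\boldsymbol v^T\boldsymbol v\rangle=N-1$ and $\langle\boldsymbol v^TM\boldsymbol v\rangle={\rm Tr}\,M$, then using ${\rm Tr}\,M(x\mathbb I-M)^{-1}=-(N-1)+x\,{\rm Tr}(x\mathbb I-M)^{-1}$ together with ${\rm Tr}(x\mathbb I-M)^{-1}\det(x\mathbb I-M)=\frac{d}{dx}\det(x\mathbb I-M)$, and finally averaging over $y_{11}^2$ (mean $\alpha_1+1$, that of $\Gamma[(\alpha_1+1)/2,1/2]$) and over $\tilde Y$ (noting $\det(x\mathbb I-\tilde Y\tilde Y^T)=\det(x\mathbb I-\tilde Y^T\tilde Y)$), yields the recursion
\[
P^{(\alpha_1,\dots,\alpha_N)}_N(x)=(x-\alpha_1-1)\,P^{(\alpha_2,\dots,\alpha_N)}_{N-1}(x)-x\,\frac{d}{dx}P^{(\alpha_2,\dots,\alpha_N)}_{N-1}(x),\qquad P_0=1.
\]

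It then remains to verify that the right-hand side of (\ref{1.1e+2}) solves this recursion. Denoting its coefficient of $x^\nu$ by $c_{N,\nu}=\sum_{k=0}^\nu\frac{(-1)^{N-k}}{(\nu-k)!k!}\prod_{l=1}^N(\alpha_l+k+1)$, and by $\hat c_{N-1,\nu}$ the analogue built from $(\alpha_2,\dots,\alpha_N)$, the recursion is equivalent to $\hat c_{N-1,\nu-1}-(\nu+\alpha_1+1)\hat c_{N-1,\nu}=c_{N,\nu}$, which follows from $\frac{1}{(\nu-1-k)!}=\frac{\nu-k}{(\nu-k)!}$ and $-(\nu-k)+(\nu+\alpha_1+1)=\alpha_1+k+1$; the base case $N=0$ is immediate. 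Precisely the same block decomposition, run with complex Gaussians and with $\langle|y_{11}|^2\rangle=\alpha_1+1$ the mean of $\Gamma[\alpha_1+1,1]$, gives the identical recursion for $\langle\det(x\mathbb I_N-\tilde W^{(2)})\rangle$, so this argument also reproduces (\ref{1.1e+}); the point is that the recursion sees the entry distributions only through the common diagonal means and the identity $\langle\boldsymbol v^*M\boldsymbol v\rangle={\rm Tr}\,M$ (with $\boldsymbol v^*$ a transpose or conjugate transpose as appropriate), and hence is insensitive to the symmetry class.

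\emph{The normalised eigenvalue density.} Here the plan is to reduce to the complex case, where the Fuss-Catalan limit under the stated rescaling is \cite[Prop.~3.1]{FW17}, interpreting \cite{CR14}. Since the Fuss-Catalan density is determined by its moments \cite{Ml10,BBCC11}, it suffices to show that the moments of the limiting normalised density of $\tilde W^{(1)}$ agree with $C_\theta(k)$; these are the $N\to\infty$ limits of $\frac1N\langle{\rm Tr}\,(\tilde W^{(1)})^p\rangle$, divided by the appropriate power of the scale. Expanding ${\rm Tr}((Y^{(1)})^TY^{(1)})^p)={\rm Tr}((Y^{(1)}(Y^{(1)})^T)^p)$ as a sum over closed walks and taking the Gaussian expectation, the terms contributing at order $N$ use each matrix entry at most quadratically, and thus depend on the entry distributions only through the covariances $\langle(Y^{(1)}_{ij})^2\rangle$, equal to $1$ for $i<j$ and $\alpha_i+1$ for $i=j$; in the complex construction $\langle|Y^{(2)}_{ij}|^2\rangle$ takes exactly these values. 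Hence the leading-order moments coincide between the real and complex cases, the limiting densities agree, and the Fuss-Catalan conclusion carries over to $\tilde W^{(1)}$. (For $\theta$ a positive integer one can alternatively bypass this by recognising the specialisation (\ref{1.0a+}) as giving the squared singular values of a product of $\theta$ rectangular Gaussian matrices, whose limiting global law is the Fuss-Catalan one irrespective of symmetry class.)

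The step I expect to require the most care is this last reduction: the higher moments of the diagonal entries genuinely differ between the symmetry classes (for instance $\langle(Y^{(1)}_{kk})^4\rangle=(\alpha_k+1)(\alpha_k+3)$ against $\langle|Y^{(2)}_{kk}|^4\rangle=(\alpha_k+1)(\alpha_k+2)$), so one must establish that closed walks traversing a single diagonal edge three or more times — the only place these higher moments could enter — are suppressed by at least one factor of $1/N$ and hence drop out of the limit. This is the standard mechanism behind universality of the global spectral law, but it has to be carried out with the inhomogeneous variance profile forced by the zero pattern (\ref{1.1bx}); making the walk-weight bookkeeping precise in that setting is where the genuine work lies.
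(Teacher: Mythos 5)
\emph{On the averaged characteristic polynomial.} Your argument is correct, and it takes a genuinely different route from the paper. You pass to the square triangular realisation $W^{(1)}=(Y^{(1)})^TY^{(1)}$ (legitimate for $\tilde W^{(1)}$, since Proposition \ref{P1x} asserts equality of the joint element PDFs), and a Schur complement on $x\mathbb I_N-Y^{(1)}(Y^{(1)})^T$ combined with Jacobi's formula gives the recursion $P_N(x)=(x-\alpha_1-1)P_{N-1}(x)-xP_{N-1}'(x)$, which the right-hand side of (\ref{1.1e+2}) is readily checked to satisfy; the derivation uses only $\langle y_{11}^2\rangle=\alpha_1+1$ and $\langle \boldsymbol v^TM\boldsymbol v\rangle={\rm Tr}\,M$, so it simultaneously reproves (\ref{1.1e+}) and is valid for all $\alpha_k>-1$. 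The paper instead forms the chiral block matrix (\ref{2.k}) and invokes the insensitivity of averaged characteristic polynomials to the entry distributions beyond mean and variance \cite[Prop.~12]{FG06} (noting this persists when entries are set to zero), thereby importing the known complex evaluation \cite[Corollary 3.7]{FI18}. Your route buys a self-contained elementary proof not relying on \cite{FI18}; the paper's is shorter and makes explicit the two-moment universality mechanism that also powers the density statement.

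\emph{On the eigenvalue density.} Here your strategy coincides with the paper's: match second moments with the complex construction, conclude the limiting global law is the same, and quote \cite[Prop.~3.1]{FW17} together with moment determinacy \cite{Ml10,BBCC11}. The difference is how the universality step is discharged. The paper cites established results for matrices with independent entries and a variance profile/zero pattern (\cite[Th.~3.7]{BS10}, \cite[Th.~2.1.21]{AGZ09}, summarised in \cite{CLM23}), whereas you sketch the moment-method bookkeeping and explicitly leave open its crux: that closed walks traversing a diagonal entry three or more times are suppressed by $1/N$ under the inhomogeneous variance profile forced by (\ref{1.1bx}) (one should also note that the off-diagonal pseudo-variances differ, $\langle (Y^{(1)}_{jk})^2\rangle=1$ versus $\langle (Y^{(2)}_{jk})^2\rangle=0$, and argue these enter only at subleading order). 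As written this is a gap, but it is exactly what the cited theorems supply, so it is closable by citation rather than by redoing the combinatorics. One caution: your parenthetical alternative for integer $\theta$ should not be leaned on, since $\tilde W^{(1)}$ with the specialisation (\ref{1.0a+}) is not the squared-singular-value ensemble of a product of $\theta$ rectangular Gaussian matrices; the two ensembles merely share the Fuss--Catalan global density.
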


We turn to the proofs of Results (1)--(4) in the next section. In the case of Proposition \ref{P1.4} this requires making use of the theory of spherical functions. As such, we give an introduction to this theory, both in the complex and the real cases, as a preliminary to the proof.

\section{Proofs of main results}
\subsection{Proof of Proposition \ref{P1}}
\subsubsection*{Proof of Proposition \ref{P1}}
With $Y^{(2)}$ defined by the requirements of the first paragraph of \S \ref{S1.1} we see in particular that the diagonal elements can be taken to be positive real. Doing this, we have that
the joint element PDF of the matrix is
\begin{equation}\label{3.1}
\Big ( \prod_{j=1}^N {2 \over \Gamma(\alpha_j + 1)} y_{jj}^{2 \alpha_j + 1} e^{ - y_{jj}^2} \Big ) {1 \over \pi^{N(N-1)/2}}
\prod_{1 \le j < k \le N} e^{-|y_{jk}|^2}.
\end{equation}

A standard result in the theory of complex Wishart matrices, which can be traced back to Bartlett \cite{Ba33} in the real case, is that for $Y^{(2)}$ an upper triangular matrix with positive real diagonal entries, and complex strictly upper triangular entries, the Jacobian associated with forming the positive definite matrix $W^{(2)} = (Y^{(2)})^\dagger Y^{(2)}$ is specified by
\begin{equation}\label{3.2}
(d W^{(2)}) = 2^N \prod_{j=1}^N y_{jj}^{2(N-j) + 1} (d Y^{(2)});
\end{equation}
see e.g.~\cite[Prop.~3.2.6]{Fo10}. Here the notation $(d Y^{(2)})$ denotes the product of all independent differentials in $Y^{(2)}$ --- real and imaginary parts included separately --- and similarly the meaning of $(d W^{(2)})$. 

It follows that (\ref{3.1}) multiplied by $(d Y^{(2)})$ can be rewritten to involve $W^{(2)}$ and $(d W^{(2)})$ according to
\begin{equation}\label{3.3}
\prod_{j=1}^N {y_{jj}^{2 (\alpha_j - N + j)} \over \Gamma(\alpha_j + 1)}
e^{- {\rm Tr} \, W^{(2)}} (d W^{(2)}).
\end{equation}
It remains to express $\prod_{j=1}^N y_{jj}^{2 (\alpha_j - N + j)}$ in terms of $W^{(2)}$. In this regard, we first make use of the rewrite
$$
\prod_{j=1}^N y_{jj}^{2 (\alpha_j - N + j)} =
\bigg ( \prod_{i=1}^{N-1} 
\prod_{j=1}^{i}
y_{jj}^{2(\alpha_i - \alpha_{i-1} - 1)} \bigg ) \prod_{j=1}^N y_{jj}^{2 \alpha_N}.
$$
The significance of this is that with
$W_i^{(2)}$ defined as the statement of the proposition, the upper triangular form of $Y^{(2)}$ implies
$
\det W_i^{(2)} = \prod_{j=1}^i y_{jj}^2,
$
and consequently
$$
\prod_{j=1}^N y_{jj}^{2 (\alpha_j - N + j)} =
\prod_{i=1}^{N-1} (\det W_i^{(2)})^{\alpha_i - \alpha_{i-1} - 1} 
( \det W^{(2)})^{\alpha_N}.
$$
With this substituted in (\ref{3.3}) we arrive at (\ref{1.5a}). 
\hfill $\square$

\begin{remark} $ $ \\
1.~The proof given above is the complex case analogue of the derivation of (\ref{qW1a}) implied by working in  \cite[proofs of Thms.~4 and 5]{Ve09b}. \\
2.~With $W^{(1)}$ as in Proposition \ref{P1x}, introduce now an invertible, constant $N \times N$ lower triangular matrix with real entries $L$. According to \cite[Th.~5]{Ve09b} the random matrix $X^{(1)}:=
L W^{(1)}L^T$ has joint element PDF
\begin{multline}\label{qW1b}
{1 \over \mathcal N_N^{(1)}(\boldsymbol \alpha)} 
\prod_{i=1}^{N-1} \det ( {}_{N-i}(L L^T))^{(\alpha_i - \alpha_{i+1} - 1)/2}
(\det (L L^T))^{-(\alpha_1 + 1)/2} \\ \times
q^{(1)}_{\boldsymbol \alpha}({X}^{(1)}) e^{- {\rm Tr} \, ({X}^{(1)}(LL^T)^{-1})/2}\chi_{{X}^{(1)}>0},
\end{multline}
where ${}_k A$ denotes the bottom $k \times k$ sub-matrix of $A$. The result (\ref{qW1b}) has an analogue in relation to $W^{(2)}$, which can be derived from Proposition \ref{P1}.

\begin{coro}\label{C1}
Let $L$ be an invertible constant $N \times N$ lower triangular matrix as in (\ref{qW1b}), but now allow complex elements. Let $W^{(2)}$ be as in the first paragraph of this subsection. The random matrix $X^{(2)}:= L W^{(2)}L^\dagger$ has joint element PDF
\begin{equation}\label{qW1c}
{1 \over \mathcal N_N^{(2)}(\boldsymbol \alpha)} 
\prod_{i=1}^{N-1} \det ( {}_{N-i}(L L^\dagger))^{\alpha_i - \alpha_{i+1} - 1}
(\det (L L^\dagger))^{-(\alpha_1 + 1)}
q^{(2)}_{\boldsymbol \alpha}({X}^{(2)}) e^{- {\rm Tr} \, ({X}^{(2)}(LL^\dagger)^{-1})}.
\end{equation}
\end{coro}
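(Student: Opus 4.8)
The plan is to deduce Corollary \ref{C1} from Proposition \ref{P1} by a change of variables. Since $L$ is invertible, $W^{(2)}\mapsto X^{(2)}:=LW^{(2)}L^\dagger$ is a bijection of the cone of $N\times N$ positive definite Hermitian matrices onto itself, with inverse $X^{(2)}\mapsto L^{-1}X^{(2)}(L^{-1})^\dagger$, so in particular $\chi_{W^{(2)}>0}$ is converted into $\chi_{X^{(2)}>0}$. The Jacobian of the linear map $W\mapsto LWL^\dagger$ on Hermitian matrices is the standard one, $(dX^{(2)})=(\det(LL^\dagger))^N(dW^{(2)})$ (see e.g.\ \cite[\S 3.2]{Fo10}; for lower triangular $L$ it also follows at once on writing $L$ as a product of elementary factors). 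Hence the joint element PDF of $X^{(2)}$ equals the PDF (\ref{1.5a}) evaluated at $W^{(2)}=L^{-1}X^{(2)}(L^{-1})^\dagger$, multiplied by $(\det(LL^\dagger))^{-N}$.

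Two substitutions then remain. For the exponential, cyclicity of the trace gives $\mathrm{Tr}\,W^{(2)}=\mathrm{Tr}\big(X^{(2)}(L^{-1})^\dagger L^{-1}\big)=\mathrm{Tr}\big(X^{(2)}(LL^\dagger)^{-1}\big)$, which is the exponential in (\ref{qW1c}). For the factor $q^{(2)}_{\boldsymbol\alpha}$ I would express the leading principal minors of $W^{(2)}$ through those of $X^{(2)}$, and here the lower triangularity of $L$ is exactly what makes things close up: writing $L_i$ for the leading $i\times i$ sub-matrix of $L$, the block structure shows that the leading $i\times i$ sub-matrix of $LW^{(2)}L^\dagger$ is $L_iW_i^{(2)}L_i^\dagger$, so $\det X_i^{(2)}=|\det L_i|^2\det W_i^{(2)}=\big(\prod_{j=1}^i|L_{jj}|^2\big)\det W_i^{(2)}$, and in particular $\det W^{(2)}=\det X^{(2)}/\det(LL^\dagger)$. (Equivalently $X^{(2)}=(Y^{(2)}L^\dagger)^\dagger(Y^{(2)}L^\dagger)$ with $Y^{(2)}L^\dagger$ upper triangular and diagonal entries $y_{jj}\overline{L_{jj}}$, the vantage point closest to the proof of Proposition \ref{P1}.) Inserting these into $q^{(2)}_{\boldsymbol\alpha}(W^{(2)})=\prod_{i=1}^{N-1}(\det W_i^{(2)})^{\alpha_i-\alpha_{i+1}-1}(\det W^{(2)})^{\alpha_N}$ gives $q^{(2)}_{\boldsymbol\alpha}(X^{(2)})$ times a factor that depends only on the diagonal entries of $L$.

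Finally one combines this diagonal factor with the Jacobian factor $(\det(LL^\dagger))^{-N}$ and rewrites the result in the stated form. Collecting powers, all the $L$-dependent factors multiply to $\prod_{j=1}^N|L_{jj}|^{-2(\alpha_j+j)}$; then, using that for lower triangular $L$ one has $\prod_{j=1}^i|L_{jj}|^2=\det((LL^\dagger)_i)$ (the leading $i\times i$ minor) and, in the notation of (\ref{qW1b}), $\det({}_{N-i}(LL^\dagger))=\prod_{j=i+1}^N|L_{jj}|^2$, a telescoping of the exponents in $i$ turns $\prod_{j=1}^N|L_{jj}|^{-2(\alpha_j+j)}$ into $\prod_{i=1}^{N-1}\det({}_{N-i}(LL^\dagger))^{\alpha_i-\alpha_{i+1}-1}(\det(LL^\dagger))^{-(\alpha_1+1)}$, which is precisely the prefactor in (\ref{qW1c}). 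This last rearrangement — matching the telescoped product of diagonal powers against the minors appearing in the statement and keeping the index conventions consistent — is the only point that really requires care; the rest is the routine complex-case analogue of the computation behind Proposition \ref{P1}, paralleling \cite[proof of Thm.~5]{Ve09b} in the real case.
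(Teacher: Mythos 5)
Your route is the same as the paper's: substitute $W^{(2)}=L^{-1}X^{(2)}(L^{\dagger})^{-1}$ in (\ref{1.5a}), use the Jacobian $(dX^{(2)})=(\det(LL^{\dagger}))^{N}(dW^{(2)})$, exploit lower triangularity so that the leading minors factorise, and telescope. Your intermediate conclusion is correct: since the leading $i\times i$ block of $LW^{(2)}L^{\dagger}$ is $L_iW_i^{(2)}L_i^{\dagger}$, the joint element PDF of $X^{(2)}$ equals
\[
{1\over \mathcal N_N^{(2)}(\boldsymbol\alpha)}\,\prod_{j=1}^{N}|L_{jj}|^{-2(\alpha_j+j)}\,
q^{(2)}_{\boldsymbol\alpha}(X^{(2)})\,e^{-{\rm Tr}\,(X^{(2)}(LL^{\dagger})^{-1})}\chi_{X^{(2)}>0},
\]
which agrees with the paper's combination $q^{(2)}_{\boldsymbol\alpha}((L^{\dagger}L)^{-1})\,(\det((L^{\dagger}L)^{-1}))^{N}$.

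The one step that fails is the minor identity you invoke at the end, $\det\!\big({}_{N-i}(LL^{\dagger})\big)=\prod_{j=i+1}^{N}|L_{jj}|^{2}$. Writing $L=\begin{pmatrix}A&0\\ B&C\end{pmatrix}$ with $C$ the bottom $k\times k$ block, the bottom $k\times k$ block of $LL^{\dagger}$ is $BB^{\dagger}+CC^{\dagger}$, not $CC^{\dagger}$; it is $L^{\dagger}L$ whose bottom block equals $C^{\dagger}C$ and has determinant $\prod_{j=N-k+1}^{N}|L_{jj}|^{2}$. Concretely, for $N=2$ and $L=\begin{pmatrix}1&0\\ b&1\end{pmatrix}$, $b\neq0$, your (correct) factor is $1$, whereas $\det({}_{1}(LL^{\dagger}))=1+|b|^{2}$ and $\det(LL^{\dagger})=1$, so the printed prefactor of (\ref{qW1c}) is $(1+|b|^{2})^{\alpha_1-\alpha_2-1}\neq1$ in general. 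Thus your telescoping actually establishes (\ref{qW1c}) with ${}_{N-i}(L^{\dagger}L)$ in place of ${}_{N-i}(LL^{\dagger})$; as literally stated the identity you assert is false, and note it is also inconsistent with your (true) observation $\det((LL^{\dagger})_i)=\prod_{j\le i}|L_{jj}|^{2}$, since both together would force $\det((LL^{\dagger})_i)\det({}_{N-i}(LL^{\dagger}))=\det(LL^{\dagger})$. You are in the same position as the paper, whose final displayed identity (and hence the statement of Corollary \ref{C1}, and apparently (\ref{qW1b}) as transcribed) has the identical $LL^{\dagger}$ versus $L^{\dagger}L$ slip; but in a blind proof you should either prove the prefactor as stated (impossible, by the counterexample) or flag the discrepancy and state the corrected form, e.g.\ via the true bottom-minor identity for $L^{\dagger}L$, or via Jacobi's identity $\det(((LL^{\dagger})^{-1})_i)=\det({}_{N-i}(LL^{\dagger}))/\det(LL^{\dagger})$ if one insists on expressing the constant through $LL^{\dagger}$.
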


\subsubsection*{Proof of Corollary \ref{C1}}
We now turn our attention to the proof of Corollary \ref{C1}.
To begin we substitute
\begin{equation}\label{WX}
W^{(2)} = L^{-1} X^{(2)} (L^\dagger)^{-1}
\end{equation}
in (\ref{1.5a}), and make use of the readily verified factorisation
\begin{equation}\label{WX1}
q_{\mathbf \alpha}^{(2)}(L^{-1} X^{(2)} (L^\dagger)^{-1}) =
q_{\mathbf \alpha}^{(2)}((L^\dagger L)^{-1}) q_{\mathbf \alpha}^{(2)}(X^{(2)}),
\end{equation}
which is reliant of $L$ being lower triangular.
Also, it is a standard result that for the change of variables (\ref{WX}) (see e.g.~\cite[Exercise 1.3 q.2 with $\beta =2$]{Fo10})
\begin{equation}\label{WX2}
(d W^{(2)}) = ( \det ((L^\dagger L)^{-1})^N  (d X^{(2)} ).
\end{equation}
Now forming the product (\ref{WX1}) and (\ref{WX2}), and using the identity
$$
q_{\mathbf \alpha}^{(2)}((L^\dagger L)^{-1}) ( \det ((L^\dagger L)^{-1})^N =
\prod_{i=1}^{N-1} \det ( {}_{N-i}(L L^\dagger))^{\alpha_i - \alpha_{i+1} - 1}
(\det (L L^\dagger))^{-(\alpha_1 + 1)},
$$
where again the fact the requirement that $L$ be lower triangular is essential,
the result (\ref{qW1c}) follows.
\hfill $\square$
\end{remark}

\subsection{Proof of Proposition \ref{P1y+}}
Let $C^{(2)}$ be given by (\ref{C1r}), and as an abbreviation set $(W^{(2)})_{jj} = t_j > 0$. Writing $W^{(2)}$ in terms of $C^{(2)}$ and $\{t_j\}$ is a change of variables, for which we have
$$
(d W^{(2)}) = \prod_{j=1}^N t_j^{N-1} dt_j (d C^{(2)}).
$$
Also, upon making this substitution  in (\ref{1.17b}) we obtain
$$
q_{\boldsymbol \alpha}^{(2)}(W^{(2)}) =
\prod_{l=1}^N t_l^{\alpha_l + (l-1) - (N-1)} q_{\boldsymbol \alpha}^{(2)}(C^{(2)}).
$$
Hence in the variables of $C^{(2)}$ and $\{t_j\}$ the PDF (\ref{1.5a}) reads
$$
{1 \over \mathcal N_N^{(2)}(\boldsymbol \alpha)} q^{(2)}_{\boldsymbol \alpha}({C}^{(2)}) \prod_{l=1}^N t_l^{\alpha_l + l - 1} e^{-t_l} \chi_{t_l > 0} \chi_{C_{ii}^{(2)} \, (i=1,\dots,N), \, C^{(2}) >0}.
$$
The result (\ref{qW1C}) now follows by integrating over $\{t_j\}$.

\subsection{Proof of Proposition \ref{P1.4}}
\subsubsection*{Introductory remarks --- complex spherical functions}
It is instructive to consider the complex case first.
A tie in with the theory of spherical functions shows itself when seeking the passage way from the joint element PDF of Proposition \ref{P1} to the corresponding eigenvalue PDF (\ref{1.0}).
Let $d \mu_U$ denote the normalised Haar measure on the space of $N \times N$ complex unitary matrices.
We have the decomposition of measure (see e.g.~\cite[Eq.~(1.27)]{Fo10})
$$
(d W^{(2)}) \propto \prod_{1 \le j < k \le N} (x_k - x_j)^2 \, d \mu_U,
$$
associated with the diagonalisation formula $W^{(2)} = U \Lambda U^\dagger$, where $U$ denotes the unitary matrix of eigenvectors and $\Lambda$ the diagonal matrix of eigenvalues. Substituting this in the element PDF formula of Proposition \ref{P1} and integrating over the eigenvectors, to obtain (\ref{1.0}) it must be that
\begin{equation}\label{qC1}
\int q^{(2)}_{\boldsymbol \alpha}({U \Lambda U^\dagger})
\, d \mu_U =
\prod_{l=1}^N {\Gamma(l) \over \Gamma(\alpha_l+1)}
{ \det [x_j^{\alpha_k}]_{j,k=1}^N \over 
\prod_{1 \le j < k \le N} (\alpha_j - \alpha_k) (x_j - x_k)}.
\end{equation}
Indeed (\ref{qC1}) is a matrix integral evaluation due to Gelfand and Na\u{\i}mark \cite{GN57} obtained in the context of the study of harmonic analysis on symmetric spaces, and is an example of a spherical function \cite{He84,Ma95}. Recent applications of this complex spherical function in random matrix theory can be found in 
\cite{KK16,ZKF21,KLZF23,Me22,MN22}

A striking feature of (\ref{qC1}) is that the corresponding group integral over $U(N)$ is in fact symmetric in $\{ \alpha_k \}$. 
This can be viewed as a consequence of the eigenvalue PDF of $W^{(2)}$ being symmetric in $\{ \alpha_k \}$, which can be established independently.

\begin{lemma}\label{L1}
The characteristic polynomial of $W^{(2)}$ is unchanged in distribution by interchanges of the parameters $\{\alpha_k \}$, and hence so is the eigenvalue PDF.
\end{lemma}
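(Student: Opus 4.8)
The plan is to reduce the assertion to the symmetry of the eigenvalue PDF, for which a closed form is already available. Since $\det(x\mathbb I_N-W^{(2)})=\prod_{j=1}^N(x-x_j)$ is a deterministic function of the eigenvalue multiset $\{x_j\}$, and the $\{x_j\}$ are exactly its roots, the law of the characteristic polynomial and the eigenvalue PDF encode the same information; I would therefore show instead that the eigenvalue PDF is a symmetric function of $(\alpha_1,\dots,\alpha_N)$. As transpositions generate the symmetric group and a transposition of two equal parameters acts trivially on the parameter vector, it is enough to verify invariance under one interchange $\alpha_p\leftrightarrow\alpha_q$ of two distinct parameters, the general case then following from the continuity of the eigenvalue PDF in $(\alpha_1,\dots,\alpha_N)$. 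For distinct $\{\alpha_k\}$ the eigenvalue PDF of $W^{(2)}$ is given by (\ref{1.0}) (Proposition \ref{P1.3x}).

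The remaining step is a one-line inspection. In (\ref{1.0}) the only $\{\alpha_k\}$-dependent factors are the normalisation $1/\prod_l\Gamma(\alpha_l+1)$, which is manifestly symmetric, and the ratio $\det[x_j^{\alpha_k}]_{j,k=1}^N\big/\prod_{1\le j<k\le N}(\alpha_j-\alpha_k)$; an interchange $\alpha_p\leftrightarrow\alpha_q$ swaps two columns of the determinant and simultaneously reverses the sign of the Vandermonde product in the $\alpha$'s, so this ratio --- hence (\ref{1.0}), hence the law of the characteristic polynomial --- is unchanged. Once Proposition \ref{P1.3x} is granted there is essentially no obstacle; and reading this symmetry back through the integration over eigenvectors that takes Proposition \ref{P1} to (\ref{1.0}) immediately gives the symmetry of the group integral (\ref{qC1}) flagged in the surrounding discussion.

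If one prefers an argument not passing through (\ref{1.0}), I would work with the nested leading principal submatrices $W_1^{(2)}\subset\dots\subset W_N^{(2)}=W^{(2)}$: given $W_{k-1}^{(2)}$, the matrix $W_k^{(2)}$ is obtained by appending a border column which, conditionally, is a centred complex Gaussian vector $w$ with covariance $W_{k-1}^{(2)}$ together with the corner entry $w^\dagger(W_{k-1}^{(2)})^{-1}w+t_k$, where $t_k\sim\Gamma[\alpha_k+1,1]$ is independent of $w$ and of the past, so the passage from level $k-1$ to level $k$ involves only $\alpha_k$; a Schur complement turns this into the recursion $p_k(x)=x\,\hat p_{k-1}(x)-t_k\,p_{k-1}(x)$, $p_1(x)=x-t_1$, with $\hat p_{k-1}$ equal to $p_{k-1}$ with a $\Gamma[1,1]$-randomised derivative subtracted. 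The claim becomes that the law of $p_N$ is unchanged under permuting $(t_1,\dots,t_N)$, and this is the genuinely hard point: the coefficients of $p_N$ are \emph{not} symmetric functions of the $t_k$, and it is the independent $\Gamma[1,1]$ variables in the recursion that conspire to restore the symmetry --- establishing that in general amounts to re-deriving (\ref{1.0}) or to proving an equivalent interlacing (Gelfand--Tsetlin) integral identity. One part of the symmetry is immediate with no work: under the flip $Y^{(2)}\mapsto J(Y^{(2)})^{T}J$ ($J$ the exchange matrix) the image is again upper triangular with the law of the Proposition \ref{P1.2x} construction for $(\alpha_N,\dots,\alpha_1)$, while $\bigl(J(Y^{(2)})^{T}J\bigr)^{\dagger}\bigl(J(Y^{(2)})^{T}J\bigr)=J\,\overline{Y^{(2)}(Y^{(2)})^{\dagger}}\,J$ shares its eigenvalues with $W^{(2)}$, so the reversal permutation already preserves the law of the characteristic polynomial.
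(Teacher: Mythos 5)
Your main argument is correct, but it takes a genuinely different route from the paper. You read the symmetry off the explicit eigenvalue PDF (\ref{1.0}) of Proposition \ref{P1.3x} (an interchange $\alpha_p\leftrightarrow\alpha_q$ swaps two columns of $\det[x_j^{\alpha_k}]$ and flips the sign of $\prod_{j<k}(\alpha_j-\alpha_k)$, with continuity in the parameters covering coincidences), and then note that the characteristic polynomial is a deterministic function of the eigenvalues. The paper instead argues directly at the level of $Y^{(2)}$: the characteristic polynomial of $(Y^{(2)})^\dagger Y^{(2)}$ is unchanged under $Y^{(2)}\mapsto P_1Y^{(2)}P_2$ for permutation matrices $P_1,P_2$ and under exchanging the order of the two factors, and such moves are used to relocate the $\alpha_k$'s while shuffling the identically distributed Gaussian entries among themselves. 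The reason the paper wants a matrix-level argument rather than an appeal to (\ref{1.0}) is twofold: the lemma is advertised as establishing the symmetry of (\ref{qC1}) \emph{independently} of the explicit formulas, and, more importantly, the proof of Proposition \ref{P1.4} later invokes ``the real version of Lemma \ref{L1}'' to justify (\ref{2.8}), in a setting where no real analogue of (\ref{1.0}) is available (it is precisely what is being derived). Your main route therefore proves the stated complex lemma, but spliced in as the proof it would leave that later appeal unsupported. Fortunately, your closing observation --- invariance in law under $Y^{(2)}\mapsto J(Y^{(2)})^TJ$, which realises the reversal $\boldsymbol\alpha\mapsto(\alpha_N,\dots,\alpha_1)$ and works verbatim for real entries --- is exactly the transferable ingredient, and the reversal is all that (\ref{2.8}) actually uses; promoted from an aside to part of the proof, it closes that gap.

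One further comment: your claim that a direct structural proof of the \emph{full} symmetry is ``genuinely hard'' sits in tension with the paper's short two-sided-permutation argument. Your caution is not baseless: if one insists that the zero pattern and the Gaussian positions be mapped among themselves by row/column permutations together with transposition, only the reversal is realised (for $N\ge 3$ the zero counts in the row and column through the cell carrying a given $\alpha_k$ obstruct a general interchange), so the paper's argument is at best terse for arbitrary transpositions. The full symmetry is nevertheless guaranteed by (\ref{1.0}), equivalently by the Gelfand--Na\u{\i}mark evaluation (\ref{qC1}), which is the content of your main argument.
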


\begin{proof}
Consider
the characteristic polynomial
$\det ( W^{(2)} - z \mathbb I )$, where $W^{(2)} = (Y^{(2)})^\dagger Y^{(2)}$ For $P_1, P_2$ permutation matrices, this is unchanged by replacing $Y^{(2)}$ by $P_1 Y^{(2)} P_2$, and also interchanging the order of this transformation of $Y^{(2)}$ with $(Y^{(2)})^\dagger$. On the other hand such transformations can be used to permute the location of any two $\alpha_k$'s in $Y^{(2)}$, while permuting the strictly upper triangular entries among themselves. The latter are all identically distributed, so we must have that the distribution of the characteristic polynomial is symmetric in $\{\alpha_k \}$.
\end{proof}

\subsubsection*{Introductory remarks --- real spherical functions}

To specify the group integral relevant to Proposition \ref{P1.4},
let $d \mu_R$ denote the normalised Haar measure on the space of $N \times N$ real orthogonal matrices. Diagonalising $W^{(1)}$ according to $W^{(1)} = R \Lambda R^T$, where $R$ denotes the real orthogonal matrix of eigenvectors and $\Lambda$ denotes the diagonal matrix of eigenvalues $\{ x_l \}$, we have for the corresponding decomposition of measure
\begin{equation}\label{2.g}
(d W^{(1)}) = k_N \prod_{1 \le j < k \le N} (x_k - x_j) \, d \mu_R,
\end{equation}
for some proportionality $k_N$;
see e.g.~\cite[Eq.~(1.11)]{Fo10}. 
Hence to go from the element PDF   (\ref{qW1a}) to the eigenvalue PDF we must specify the function of the eigenvalues specified by the group integral $\int q^{(1)}_{\boldsymbol \alpha}({R \Lambda R^T})
\, d \mu_R$.

It turns out that this group integral has a number of structured properties. One is that
\begin{equation}\label{po}
\int q^{(1)}_{\boldsymbol \alpha}({\Sigma^{1/2} R \Lambda R^T}\Sigma^{1/2}) \, d \mu_R =
\bigg ( \int q^{(1)}_{\boldsymbol \alpha}({ R \Lambda R^T}) \, d \mu_R \bigg )
\bigg ( \int q^{(1)}_{\boldsymbol \alpha}({ R \Sigma R^T})
\, d \mu_R \bigg );
\end{equation}
for a proof see \cite[Pg.~103]{Me22}. As such, the group integral is identified as a (multiplicative) spherical integral; see e.g.~\cite{Ma95} for generalities. Also, in keeping with the real version of Lemma \ref{L1}, the group integral is symmetric in  $\{\alpha_l \}$.
For general real $\alpha_k$ this spherical function can be identified as an example of a so-called Heckman-Opdam hypergeometric function \cite{HO87,Su16}. However this fact is of limited practical utility, as in general there are no associated structured functional forms.

 An exception is
 when the partition $\boldsymbol \gamma$ defined by
(\ref{gat}) and surrounding text is introduced in favour of $\{ \alpha_k\}$. Then, it has been known since the work of James \cite{Ja64} (see also \cite{KM84}) that subject to the further requirement that the exponents in (\ref{q1W}) are all integer, the spherical function is in fact a polynomial in the eigenvalues, and can be alternatively identified (up to proportionality) as the zonal polynomial of mathematical statistics, $\mathcal Y_{\boldsymbol\kappa} (x_1,\dots,x_N)$ say. This fact is the starting point for our derivation of (\ref{qW1c+}). 

The practical utility of identifying (a restriction of) the spherical function in terms of the zonal polynomials rather than a Heckman-Opdam hypergeometric function
is that the former
admit other characterisations which can be used for computations (although not an explicit general formula as implied by (\ref{qC1}) in the complex case). Thus expanded in terms of the monomial symmetric polynomials
$\{m_\nu \}$, they have a triangular structure
 \cite[Ch.~7]{Mu82}
\begin{equation}\label{s1}
\mathcal Y_{\boldsymbol\kappa}(x_1,\dots,x_N) = c_{\boldsymbol\kappa,\boldsymbol\kappa} m_{\kappa}(x_1,\dots,x_N) + \sum_{\boldsymbol\mu < \boldsymbol\kappa} c_{\boldsymbol\kappa,\boldsymbol\mu} m_{\boldsymbol\mu}(x_1,\dots,x_N),
\end{equation}
for a particular non-zero constant $c_{\boldsymbol\kappa,\boldsymbol\kappa}$ (independent of $N$ and $\{x_i\}$).  The notation $\boldsymbol\mu < \boldsymbol\kappa$ denotes the partial order on partitions of the same size $|\boldsymbol\mu|$ and $|\boldsymbol\kappa|$ (the size of a partition is the sum of the parts) defined by  the requirement that $\sum_{i=1}^p \mu_i \le \sum_{i=1}^p \kappa_i$ for each $p=1,\dots,N$. Up to the normalisation $c_{\boldsymbol\kappa,\boldsymbol\kappa}$, the coefficients in (\ref{s1}) can be explicitly computed by the property of the zonal polynomials
that they are the  eigenfunctions of the differential operator
$$
\sum_{j=1}^N x_j^2 {\partial^2 \over \partial x_j^2} + \sum_{i,j=1 \atop j \ne i} {x_i^2 \over x_i - x_j} {\partial \over \partial x_i}.
$$
The choice of normalisation used in mathematical statistics specifies $c_{\boldsymbol\kappa,\boldsymbol\kappa}$ so that
$$
\sum_{|\kappa| = k} \mathcal Y_{\boldsymbol\kappa}(x_1,\dots,x_N) = (x_1+\cdots+x_N)^k.
$$
We refer to the recent work \cite{LK20} for more on the explicit generation of zonal polynomials from this characterisation, and also alternative characterisations.


\subsubsection*{Workings of the proof}
The symmetry property in $\{\alpha_k\}$ of the group integrals in (\ref{po}) as implied by the real version 
of Lemma \ref{L1}
 tells us that we have
\begin{equation}\label{2.8}
\int {q}^{(1)}_{\boldsymbol \alpha}(R \Lambda R^T) \,  d \mu_R =
\int {q}^{(1)}_{\overleftarrow {\boldsymbol \alpha}}(R \Lambda R^T) \,  d \mu_R.
\end{equation}
Here $\overleftarrow {\boldsymbol \alpha} = (\alpha_N,\dots, \alpha_1)$,
or equivalently in terms of $\boldsymbol \gamma$ introduced in the sentence including (\ref{gat}), 
\begin{equation}\label{2.9}
\overleftarrow {\boldsymbol \alpha} = (\gamma_1 + N - 1, \gamma_2 + N - 2,\dots,\gamma_N).
\end{equation}
Now define
\begin{equation}\label{2.9a}
    \tilde{q}^{(1)}_{\boldsymbol \kappa}(X) := 
\prod_{i=1}^{N-1} ( \det ({X}_i)^{\kappa_i - \kappa_{i+1}}
( \det {X} )^{\kappa_N}.
\end{equation}
Comparing this with the definition of ${q}^{(1)}_{\boldsymbol \alpha}(X)$ in (\ref{q1W}) and taking into consideration (\ref{2.9}) and (\ref{2.8}) shows
\begin{equation}\label{2.10}
\int {q}^{(1)}_{\boldsymbol \alpha}(R \Lambda R^T) \,  d \mu_R =
\int \tilde{q}^{(1)}_{({\boldsymbol \gamma}-1)/2}(R \Lambda R^T) \,  d \mu_R,
\end{equation}
where $({\boldsymbol \gamma}-1)/2$ denotes the array in (\ref{gat}) with each part having one subtracted and then divided by 2.

On the other hand, the theory of the introductory remarks of this subsection tells us that for $\kappa$ a partition
\begin{equation}
{\mathcal Y_{\boldsymbol\kappa} (x_1,\dots,x_N) \over \mathcal Y_{\boldsymbol\kappa} ((1)^N))} =
\int \tilde{q}^{(1)}_{\boldsymbol \kappa}(R \Lambda R^T) \,  d \mu_R, 
\end{equation}
where $Y_{\boldsymbol\kappa} ((1)^N))$ denotes $Y_{\boldsymbol\kappa} (x_1,\dots,x_N)$ evaluated at $x_1 = \cdots = x_N = 1$ as used in (\ref{qW1c+}). Hence if all the parts of $(\boldsymbol\gamma-1)/2$ are an integer, and thus all parts of $\boldsymbol\gamma$ are odd we have
\begin{equation}\label{2.a}
\int {q}^{(1)}_{\boldsymbol \alpha}(R \Lambda R^T) \,  d \mu_R = {\mathcal Y_{\boldsymbol(\gamma-1)/2} (x_1,\dots,x_N) \over \mathcal Y_{\boldsymbol(\gamma-1)/2} ((1)^N))}.
\end{equation}
Noting too from the definition (\ref{2.9a}) that for any constant $c$ added to each part of $\kappa$,
$$
\tilde{q}^{(1)}_{\boldsymbol \kappa + c }(X) = (\det X)^{c}
\tilde{q}^{(1)}_{\boldsymbol \kappa }(X),
$$
in addition to (\ref{2.a}) we have
\begin{equation}\label{2.b}
\int {q}^{(1)}_{\boldsymbol \alpha}(R \Lambda R^T) \,  d \mu_R = \prod_{l=1}^N x_l^{-1/2} {\mathcal Y_{\boldsymbol\gamma/2} (x_1,\dots,x_N) \over \mathcal Y_{\boldsymbol\gamma/2} ((1)^N))},
\end{equation}
valid for all parts of $\boldsymbol\gamma$ are even. Combining with (\ref{2.g}), (\ref{2.a}) and (\ref{2.b}) establishes (\ref{qW1c+}) up to an $N$-dependent proportionality $k_N$ in (\ref{2.g}). The particular choice $\gamma_{N+1-k} - 1 = 0$ ($k=1,\dots,N$) reduces  (\ref{2.a}) to unity. From the definition above (\ref{gat}) this choice is equivalent to taking $\alpha_k = k $ ($k=1,\dots,N$). Hence changing variables to the eigenvalues and eigenvectors in the element PDF (\ref{qW1a}) and integrating over both, the proportionality $k_N$ in (\ref{2.g}) must be such that
$$
{k_N \over \mathcal N_N^{(1)}(\boldsymbol \alpha) |_{\{\alpha_k = k \}}} \int_{0}^\infty d x_1 \cdots \int_{0}^\infty d x_N \, \prod_{l=1}^N e^{- x_l/2} \prod_{1 \le j < k \le N} | x_k - x_j|  = 1.
$$
The explicit evaluation of the multiple integral is well known \cite[Prop.~4.7.3 with $\beta =1$, $a=0$]{Fo10} which specifies $k_N$, and completes the derivation of (\ref{qW1c+}). \hfill $\square$

\begin{remark}\label{R1.3} 
 The paper \cite{DG14}, Theorem 2 with $\beta = 1$ and $\Sigma = I$ claims a formula for the eigenvalue PDF of (\ref{qW1a}). However this is incorrect due to its reliance on the erroneous Eq.~(4) of the same paper.
 \end{remark}

\subsection{Proof of Proposition \ref{P4x}}
In relation to the averaged characteristic polynomial, one considers 
\begin{equation}\label{2.k}
\det \bigg ( x \mathbb I_{n+N} -
\begin{bmatrix} 0_n & \tilde{Y}^{(2)} \\ (\tilde{Y}^{(2)})^\dagger & 0_N \end{bmatrix} \bigg ) = x^{n-N} \prod_{l=1}^N (x^2 - \lambda_l^2 ),
\end{equation}
where $\{ \lambda_l^2\}$ are the eigenvalues of $(\tilde{Y}^{(2)})^\dagger \tilde{Y}^{(2)}$. Consequently
$$
\Big \langle \det(x\mathbb I_N - \tilde{W}^{(2)}) \Big \rangle = \prod_{l=1}^N (x - \lambda_l^2 ).
$$
The characteristic polynomial on the LHS of (\ref{2.k}), in the case that all the entries are identically distributed with mean zero and (complex) variance $\sigma^2$ has been studied previously \cite[Prop.~12]{FG06}. The key point is that the averaged characteristic polynomial is independent of other statistical features of the entries. Inspection of the proof shows that this remains true in the case that some entries in $\tilde{Y}^{(2)}$ are set equal to zero. In particular, in this setting the choice of standard real Gaussian entries, and standard complex Gaussian entries, have the same averaged characteristic polynomial. The result (\ref{1.1e+2}) now follows from knowledge that the RHS applies in the case of standard complex Gaussian entries for the nonzero entries \cite[Corollary 3.7]{FI18}.

The independence of the averaged characteristic polynomial on statistical properties of $\tilde{Y}^{(2)}$ beyond the first two moments of the entries is a form of universality. If one considers instead the normalised scaled spectral density, it is again the case that universality results hold true, although unlike for the averaged characteristic polynomial they require the large $N$ limit for their validity. The relevant theory is contained in \cite[Th.~3.7 and pgs.~47--48]{BS10} and \cite[Th.~2.1.21]{AGZ09}. It is also summarised in \cite[Paragraph below (40)]{CLM23}. This theory gives that the limiting normalised scaled spectral density in the case of standard real Gaussian entries of the nonzero entries of $\tilde{Y}^{(2)}$ is the same as that for the case of standard complex Gaussian entries. Since the latter is specified in terms of the Fuss-Catalan density, so is the former.

\subsection*{Acknowledgements}
	This research is part of the program of study supported
	by the Australian Research Council Discovery Project grant DP210102887.


\begin{thebibliography}{10}


  \bibitem{AGZ09}
G.W. Anderson, A.~Guionnet, and O.~Zeitouni, \emph{An introduction to random
  matrices}, Cambridge University Press, Cambridge, 2009.

\bibitem{Ba33}
M.S. Bartlett, \emph{On the theory of statistical regression}, Proc. R. Soc. Edinb. \textbf{53} (1933) 260--283.

\bibitem{Be56}
R.~Bellman, \emph{A generalization of some integral identities due to Ingham and Siegel}, Duke Math.
J., 23 (1956), 571--577.

\bibitem{BS10}
Z.~Bai and J.W.~Silverstein, \emph{Spectral analysis of large dimensional random matrices}, 2nd edition,
Springer, 2010.

\bibitem{BBCC11}
T. Banica, S. T. Belinschi, M. Capitaine, and B. Collins, \emph{Free Bessel laws}, Canad. J. Math.
\textbf{63} (2011), 3--37.


\bibitem{Bo98}
A.~Borodin, \emph{Biorthogonal ensembles}, Nucl. Phys. B \textbf{536} (1998),
  704--732.


  \bibitem{BF22a}
   S.-S.~Byun and P.J.~Forrester,   \emph{Progress on the study of the Ginibre ensembles I: G{\SMALL in}UE},
   arXiv:2211.16223.

\bibitem{CC21}
   C. Charlier and T. Claeys, \emph{Global rigidity and exponential moments for soft and hard edge
point processes}, Prob. Math. Phys. \textbf{2} (2021), 363--417.


\bibitem{CLM21}
   C. Charlier, J. Lenells, and J. Mauersberger, \emph{Higher order large gap asymptotics at the
hard edge for Muttalib-Borodin ensembles}, Comm. Math. Phys. \textbf{384} (2021), 829--907.


  \bibitem{Ch14}
D.~Cheliotis, \emph{Triangular random matrices and biothogonal ensembles},
Statist. Prob. Letters \textbf{134} (2018), 36--44.

\bibitem{CGS19}
T. Claeys, M. Girotti, and D. Stivigny, \emph{Large gap asymptotics at the hard edge for product
random matrices and Muttalib-Borodin ensembles}, Int. Math. Res. Not.  \textbf{2019} (2019), 2800--2847.


\bibitem{CR14}
T. Claeys, and S. Romano,
\emph{Biorthogonal ensembles with two-particle interactions.}
Nonlinearity 27 (2014) 2419.


\bibitem{CLM23}
F.D.~Cunden, M.~Ligab\`o and T.~Monni,
\emph{Random matrics associated to Young diagrams}, arXiv:2301.13555.

\bibitem{DG14}
J.A.~D\'iaz-Garc\'ia,
\emph{On Riesz distribution}, Metrika \textbf{77} (2014), 469--481.

\bibitem{GKLS22}
E.~Ghorbel, K.~Kammoun,  M.~Louati and A.~Sallem, \emph{Estimation of the parameters of a Wishart extension on symmetric matrices}, J. Korean Stat. Soc. \textbf{51} (2022), 1071--1089. 

\bibitem{GL19}
E.~Ghorbel and M.~Louati,  \emph{The multiparameter t'distribution}, Filomat \textbf{33} (2019), 4137--4150.


\bibitem{FK94}
J.~Faraut and A.~Kor\'anyi,  \emph{Analysis on symmetric cones}, Oxford Mathematical
Monographs, Clarendon Press, 1994.


    \bibitem{Fo10}
P.J. Forrester, \emph{Log-gases and random matrices}, Princeton University Press,
  Princeton, NJ, 2010.

 \bibitem{Fo23}
P.J. Forrester, \emph{Rank 1 perturbations in random matrix theory -- a review of exact results},
Random Matrices: Theory and Applications, to appear, arXiv:2201.00324. 

\bibitem{FG06}
P.J. Forrester and A.~Gamburd, \emph{Counting formulas associated with some
  random matrix ensembles}, J. Combin. Th. Series A \textbf{113} (2006),
  934--951.

\bibitem{FI18}
  P.J. Forrester and J.R. Ipsen, \emph{Selberg integral theory and Muttalib-Borodin ensembles}, Adv. Appl.
Math. \textbf{95} (2018), 152--176.

\bibitem{FL22}
P.J. Forrester and S.-H. Li, \emph{Rate of convergence at the hard edge for various
P \'olya ensembles of positive definite matrices}, Integral Transforms Spec. Funct. \textbf{33} (2022), 466--484.

\bibitem{FL15}
P. J. Forrester, and D.-Z. Liu,
\emph{Raney distributions and random matrix theory.}
J. Stat. Phys. \textbf{158} (2015) 1051.

  \bibitem{FW17} 
  P.J. Forrester and D. Wang,  \emph{Muttalib-Borodin ensembles in random matrix theory --- realisations and correlation functions},  Elec. J. Probab. \textbf{22} (2017), 54.

 \bibitem{FZ20}
P.J. Forrester and J.~Zhang,  
\emph{Parametrising correlation matrices}, Journal of Multi-
variate Analysis \textbf{178} (2020), 104619.  


  \bibitem{GN57}
I.M. Gelfand and M.A. Na\u{\i}mark, \emph{Unit\"{a}re Darstellungen der klassischen Gruppen,} Akademie-Verlag, Berlin (1957), translated from Russian: Trudy Mat. Inst. Steklov. 36 (1950), 288.

\bibitem{GN00}
 A.K. Gupta and D.K. Nagar, \emph{Matrix variate distributions}, Chapman and Hall/CRC, Boca Raton, FL, 2000.

\bibitem{Ha21}
 A Hassairi, \emph{Riesz probability distributions}, De Gruyter Series in Probability and Stochastics, 2021.

\bibitem{HO87}
G. J. Heckman and E. M. Opdam, \emph{Root systems and hypergeometric functions. I}, Compositio
Mathematica, \textbf{64} (1987), 329--352.

\bibitem{He84}
S. Helgason, \emph{Groups and Geometric Analysis. Integral geometry, invariant differential
operators, and spherical functions. Corrected reprint of the 1984 original}, 1984.

\bibitem{In33}
A.E. Ingham, \emph{An integral which occurs in statistics}, Proc. Cambridge Philos. Soc. \textbf{29} (1933),
271--276.

\bibitem{Ja60}
A.T.~James, \emph{The distribution of the latent roots of the covariance matrix},
Ann. Math. Stat. \textbf{31} (1960), 151--158.

\bibitem{Ja61}
A.T.~James, \emph{Zonal polynomials of the real positive definite symmetric matrices}, Ann. Math.  \textbf{74} (1961), 456--469.

\bibitem{Ja64}
A.T.~James, \emph{Distributions of matrix variate and latent roots derived from
  normal samples}, Ann. Math. Statist. \textbf{35} (1964), 475--501.

\bibitem{Ja68}
A.T. ~James, \emph{Calculation of zonal polynomial coefficients by use of the Laplace-Beltrami operator}, Annals Math. Stat. \textbf{39} (1968), 1711--1718.

\bibitem{LK20}
L. Jiu and C. Koutschan, \emph{Calculation and properties of zonal polynomials}, Mathematics in Computer Science. \textbf{14} (2020), 623--640.

\bibitem{KK16}
M. Kieburg, and H. K\"osters,
\emph{Exact Relation between Singular Value and Eigenvalue Statistics}. 
Random Matrices: Theor. Appl. \textbf{5} (2016) 1650015.

\bibitem{KLZF23}
M. Kieburg, S.-L.~Li, J.~Zhang and P.J.~Forrester, \emph{Cyclic P\'olya ensembles on the unitary matrices and their spectral statistics}, Constr. Approx. \textbf{57} (2023), 1063--1108.



\bibitem{KM84}
  H.B. Kushner and M.~Meisner, \emph{Formulas for zonal polynomials}, J. Mult. Analysis,
\textbf{14} (1984), 336--347.

 \bibitem{La18}
 G.~Lambert, \emph{Limit theorems for biorthogonal ensembles and related combinatorial identities},
 Adv.~Math. \textbf{329}, (2018) 590--648.  



\bibitem{Ma95}
I.G. Macdonald, \emph{Hall polynomials and symmetric functions}, 2nd ed., Oxford
  University Press, Oxford, 1995.


\bibitem{MN22}
  C. McSwiggen and J. Novak, \emph{Majorization and spherical functions},
  International Mathematics Research Notices, \textbf{2022} (2022), 3977--4000.


  \bibitem{Me22}
  P.~Mergny, \emph{Spherical integral and their applications to random matrix theory}, PhD thesis, Universit\'e Paris-Saclay, 2022. 

 \bibitem{MP20}
 P.~Mergny and M.~Potters,
 \emph{Asymptotic behavior of the multiplicative counterpart of the
Harish-Chandra integral and the S-transform}, arXiv:2007.09421.


\bibitem{Ml10}
W.~M{\l}otkowski, \emph{Fuss-catalan numbers in noncommutative probability},
  Documenta Mathematica \textbf{15} (2110), 939--955.

\bibitem{Mo21}
L.D. Molag, \emph{The local universality of Muttalib-Borodin ensembles when the parameter $\theta$
is the reciprocal of an integer}, Nonlinearity, \textbf{34} (2021) 3485--3564.

  \bibitem{Mu82}
R.J. Muirhead, \emph{Aspects of multivariate statistical theory}, Wiley, New
  York, 1982.

  \bibitem{Mu95}
K.A. Muttalib, \emph{Random matrix models with additional interactions}, J.
  Phys. A \textbf{28} (1995), L159--164.

 \bibitem{Ri49}
  M. Riesz,
\emph{L'intégrale de Riemann-Liouville et le problème de Cauchy}.
Acta Mathematica \textbf{81} (1949), 1--223.

 \bibitem{Si35}
  C.L. Siegel, \emph{\"Uber die analytische theorie der quadratischen formen} Ann. Math. \textbf{36}
(1935), 527--606. 

\bibitem{Su16}
  Y. Sun, \emph{A new integral formula for Heckman-Opdam hypergeometric functions}, Adv. in Math. \textbf{289} (2016),
1157--1204.

\bibitem{Ve09a}
E.~Velva, \emph{Testing a normal covariance matrix for small samples with monotone missing data}, Applied Mathematical Sciences \textbf{3} (2009), 2671--2679.

\bibitem{Ve09b}
E.~Veleva, \emph{Stochastic representations of the Bellman gamma distribution}, In Proceeding of international conference on theory and
applications in mathematics and informatics, ICTAMI 2009. Alba Iulia, Romania (2009), 463--474.

\bibitem{Ve11}
E.~Veleva, \emph{Properties of the Bellman gamma distribution}, Pliska Studia Mathematica Bulgar. \textbf{20} (2011), 221--232.

\bibitem{vN41}
J.~von Neumann, \emph{Distribution of the ratio of the mean square successive difference to
the variance}, Annals Math. Stat. \textbf{12} (1941), 367--395.


\bibitem{WZ22}
D.~Wang and L.~Zhang, \emph{A vector Riemann-Hilbert approach to the Muttalib-Borodin
ensembles}, J. Funct. Anal. \textbf{282} (2022), 109380.

\bibitem{Wi28}
J. Wishart, \emph{The generalized product moment distribution in samples from a normal multivariate population},
Biometrika \textbf{20A} (1928), 32--43.

  \bibitem{Zh15}
L. Zhang, 
\emph{Local universality in biorthogonal Laguerre ensembles.} 
J. Stat. Phys. \textbf{161}, (2015) 688--711.

\bibitem{ZKF21}
J.~Zhang, M.~Kieburg and P.J.~Forrester,
\emph{Harmonic analysis for rank-1 randomised Horn problems}, Lett. Math. Phys. \textbf{111} (2021), 98.










\end{thebibliography}
\nopagebreak

\providecommand{\bysame}{\leavevmode\hbox to3em{\hrulefill}\thinspace}
\providecommand{\MR}{\relax\ifhmode\unskip\space\fi MR }
\providecommand{\MRhref}[2]{%
  \href{http://www.ams.org/mathscinet-getitem?mr=#1}{#2}
}
\providecommand{\href}[2]{#2}

\end{document}